\newcommand{\comment}[1]{}
\newtheorem{lemma}{Lemma}[section]
\newtheorem{theorem}[lemma]{Theorem}
\newtheorem{corollary}[lemma]{Corollary}
\newtheorem{observation}[lemma]{Observation}
\newtheorem{definition}[lemma]{Definition}
\newtheorem{problem}[lemma]{Problem}
\def\squareforqed{\hbox{\rlap{$\sqcap$}$\sqcup$}}
\def\qed{\ifmmode\squareforqed\else{\unskip\nobreak\hfil
\penalty50\hskip1em\null\nobreak\hfil\squareforqed
\parfillskip=0pt\finalhyphendemerits=0\endgraf}\fi}
\newcommand{\rats}{\mbox{$\mathbb Q$}} 
\newcommand{\nats}{\mbox{$\mathbb N$}}
\newcommand{\ints}{\mbox{$\mathbb Z$}}
\newlength{\tablength}
\newlength{\spacelength}
\newcommand{\tabstar}{\hspace*{\tablength}}
\newcommand{\spacestar}{\hspace*{\spacelength}}
\newcommand\vol{-\!\!\!\! V}  
\def\obeytabs{\catcode`\^^I=\active}
{\obeytabs\global\let^^I=\tabstar}
{\obeyspaces\global\let =\spacestar}
\newenvironment{display}{\begingroup\obeylines\obeyspaces\obeytabs}{\endgroup}
\newenvironment{prog}{\begin{display}\parskip0pt\sf}{\end{display}}
\title{The complexity of cyber attacks in a new layered-security model\\
and the maximum-weight, rooted-subtree problem}
\author{
{\sl Geir Agnarsson}\\
Department of Mathematical Sciences\\
George Mason University\\
Fairfax, VA 22030\\
{\tt geir@math.gmu.edu}
\and
{\sl Raymond Greenlaw}\\ Cyber Security Studies\\
United States Naval Academy\\
Annapolis, Maryland 21402\\
{\tt greenlaw@usna.edu}
\and
{\sl Sanpawat Kantabutra}\\
Computer Engineering Department\\
Chiang Mai University\\
Chiang Mai, 50200, Thailand\\
{\tt sanpawat@alumni.tufts.edu}
}
\date{\today}
\begin{document}

\maketitle

\begin{abstract}
This paper makes three contributions to cyber-security research.
First, we define a model for cyber-security systems and
the concept of a {\em cyber-security attack\/} within the model's
framework.  The model highlights the importance of {\em game-over
  components\/}---critical system components which if acquired will
give an adversary the ability to defeat a system completely.  The
model is based on systems that use defense-in-depth/layered-security
approaches, as many systems do.  In the model we define the concept of
{\em penetration cost}, which is the cost that must be paid in order
to break into the next layer of security.  Second, we define natural
decision and optimization problems based on cyber-security attacks in
terms of doubly weighted trees, and analyze their complexity. More
precisely, given a tree $T$ rooted at a vertex $r$, a {\em penetrating
  cost} edge function $c$ on $T$, a {\em target-acquisition} vertex
function $p$ on $T$, the attacker's {\em budget} and the {\em
  game-over threshold} $B,G \in {\rats}^{+}$ respectively, we
consider the problem of determining the existence of a rooted subtree
$T'$ of $T$ within the attacker's budget (that is, the sum of the
costs of the edges in $T'$ is less than or equal to $B$) with total
acquisition value more than the game-over threshold (that is, the sum
of the target values of the nodes in $T'$ is greater than or equal to
$G$).  We prove that the general version of this problem is
intractable, but does admit a polynomial time approximation scheme.
We also analyze the complexity of three restricted
versions of the problems, where the penetration cost is the constant
function, integer-valued, and rational-valued among a given fixed
number of distinct values. Using recursion and dynamic-programming
techniques, we show that for constant penetration costs an {\em
  optimal\/} cyber-attack strategy can be found in polynomial time,
and for integer-valued and rational-valued penetration costs {\em
  optimal\/} cyber-attack strategies can be found in pseudo-polynomial
time.  Third, we provide a list of open problems relating to the
architectural design of cyber-security systems and to the model.
\end{abstract}

{\bf Keywords:} 
cyber security, defense-in-depth,
game over, 
information security, 
layered security, 
weighted rooted trees, 
complexity, 
polynomial time,
pseudo-polynomial time.

\section{Introduction}
\label{sec:introduction}

Our daily life, economic vitality, and a nation's security depend on a
stable, safe, and secure cyberspace.  Cyber security is so important
that the United States (US) Department of Defense established the
US~Cyber Command to take charge of pulling together existing
cyberspace resources, creating synergy, and synchronizing war-fighting
efforts to defend the information-security environment of the
US~\cite{jackson}.  Other countries also have seen the importance of cyber
security.  To name just a few in what follows, in response to North
Korea's creation of a cyber-warfare unit, South Korea created a
cyber-warfare command in December~2009~\cite{koreacyber}.  During
2010, China introduced its first department dedicated to defensive
cyber war and information security in response to the creation of the
US~Cyber Command~\cite{branigan}.  The United Kingdom has also stood
up a cyber force~\cite{brown}.  Other countries are quickly following
suit.

Cyberspace has become a new frontier that comes with new
opportunities, as well as new risks.  According to a 2012 study of
US~companies, the occurrence of cyber attacks has more than doubled
over a 3-year period while the adverse financial impact has increased
by nearly 40 percent~\cite{ponemon}.  More specifically,
US~organizations experienced an average of 50, 72, and 102 successful
attacks against them per week in 2010, 2011, and 2012, respectively.
In~\cite{sparrows} a wide range of cyber-crime statistics are
reported, including locations of attacks, motivation behind attacks,
and types of attacks.  The number of cyber attacks is increasing
rapidly, and for the month of June 2013, 4\% of attacks were
classified as cyber warfare, 8\% as cyber espionage, 26\% as
hacktivism, and 62\% as cyber crime (see~\cite{sparrows}).  Over the
past couple of years these percentages have varied significantly from
month-to-month.  In order to respond to cyber attacks, organizations
have spent increasing amounts of time, money, and energy at levels
that are now becoming unsustainable.  Despite the amounts of time,
money, and energy pouring into cyber security, the field is still
emerging and widely applicable solutions to the problems in the field
have not yet been developed.

A secure system must defend against all possible cyber attacks,
including zero-day attacks that have never been known to the
defenders.  But, due to limited resources, defenders generally develop
defense systems for the attacks that they do know about.  Their
systems are secure to known attacks, but then become insecure as new
kinds of attacks emerge, as they do frequently.  To build a secure
system, therefore, requires first principles of security.  ``In other
words, we need a {\em science of cyber security\/} that puts the
construction of secure systems onto a firm foundation by giving
developers a body of laws for predicting the consequences of design
and implementation choices''~\cite{schneider}.  To this end Schneider
called for more models and abstractions to study cyber
security~\cite{schneider}.  In his article Schneider suggested
building a science of cyber security from existing areas of computer
science.  In particular, he mentioned formal methods, fault-tolerance,
cryptography, information theory, game theory, and experimental
computer science.  All of these subfields of computer science are
likely to be valuable sources of abstractions and laws.

Cyber security presents many new challenges.  Dunlavy et al.~discussed
what they saw as some of the major mathematical problems in cyber
security~\cite{dunlavy:hendrickson:kolda}.  One of the main
challenges is modeling large-scale networks using explanatory and
predictive models.  Naturally, graph models were proposed.  Some
common measures of a graph that such a model would seek to emulate are
distribution over the entire graph of vertex in-degrees and out-degrees,
graph diameter, community structure, and evolution of any of the
mentioned measures over time~\cite{chakrabarti:faloutsos}.  Pfleeger
discussed a number of useful cyber-security metrics~\cite{pfleeger}.
She introduced an approach to cyber-security measurement that uses a
multiple-metrics graph as an organizing structure by depicting the
attributes that contribute to overall security, and uses a process
query system to test hypotheses about each of the goals based on
metrics and underlying models.  Rue, Pfleeger, and Ortiz developed a
model-evaluation framework that involves making explicit each model's
assumptions, required inputs, and applicability
conditions~\cite{rue:pfleeger:ortiz}.

Complexity science, which draws on biological and other natural
analogues, seems under utilized, but perhaps is one of the
more-promising approaches to understanding problems in the
cyber-security domain~\cite{armstrong:mayo:siebenlist}.  Armstrong,
Mayo, and Siebenlist suggested that models of complex cyber systems
and their emergent behavior are needed to solve the problems arising
in cyber security~\cite{armstrong:mayo:siebenlist}.  Additionally,
theories and algorithms that use complexity analysis to reduce an
attacker's likelihood of success are also needed.  Existing work in
the fields of fault tolerance and high-reliability systems are
applicable too.  Shiva, Roy, and Dasgupta proposed a cyber-security
model based on game theory~\cite{shiva:roy:dasgupta}.  They discovered
that their model works well for a dynamically-changing scenario, which
often occurs in cyber systems.  Those authors considered the interaction
between the attacks and the defense mechanisms as a game played
between the attacker and the defender.

This paper is our response to the call for more cyber-security models
in~\cite{schneider}.  This work also draws attention to the importance
of designing systems that do not have {\em game-over
  components}---components that are so important that once an
adversary has taken them over, one's system is doomed. Since, as we
will see, such systems can be theoretically hacked fairly efficiently.
We model (many known) security systems mathematically and then discuss
their vulnerabilities.  Our model's focus is on systems having layered
security; each security layer possesses valuable assets that are kept
in {\em containers\/} at different levels.  An attacker attempts to
break into these layers to obtain assets, paying penetration costs
along the way in order to break in, and wins if a given game-over
threshold is surpassed before the attacker's budget runs out.  A given
layer of security might be, for example, a firewall or encryption.
The associated cost of by-passing the firewall or encryption is the
penetration cost that is used in the model.  We formalize the notion
of a cyber attack within the framework of the model.  For a number of
interesting cases we analyze the complexity of developing cyber-attack
strategies.

The outline of this article is as follows.  In Section~\ref{sec:model}
we define the model for cyber-security systems, present an
equivalent weighted-tree view of the model, and define natural
problems related to the model.  A general decision problem (Game-Over
Attack Strategy, Decision Problem {\sc GOAS-DP}) based on the model is
proved NP-complete in Section~\ref{sec:general}; its corresponding
optimization problem ({\sc GOAS-OP}) is NP-hard.  In
sections~\ref{sec:constant-cost}, \ref{sec:int-cost-approx},
and~\ref{sec:rational-cost} we provide a polynomial-time algorithm for
solving {\sc GOAS-OP} when penetration costs are constant, a
pseudo-polynomial-time algorithm for solving {\sc GOAS-OP} when
penetration costs are integers, a polynomial-time approximation algorithm for
solving {\sc GOAS-OP} in general, and a polynomial-time algorithm 
for solving {\sc GOAS-OP} when penetration costs are
rational numbers from a prescribed finite collection of possible
rational costs, respectively. As an easy corollary,
we obtain a pseudo-polynomial-time algorithm for solving an optimization
problem on general weighted non-rooted trees.  
Table~\ref{tab:summary} summarizes the computational results 
of the paper.
\begin{table}[htb]
\begin{center}
\begin{tabular}{|l||c|c|}
\hline 
{\bf Problem Name} & {\bf Time} & {\bf Class} \\
\hline 
{\sc GOAS-DP} & -- & NP-complete\\
{\sc GOAS-OP} & -- & NP-hard\\
{\sc GOAS-DP} constant pc & $O(m^2n)$ & P\\
{\sc GOAS-OP} constant pc & $O(m^2n)$ & P\\
{\sc GOAS-DP} integer pc & $O(B^2 n)$ & pseudo-pt\\
{\sc GOAS-OP} integer pc & $O(B^2 n)$ & pseudo-pt\\
{\sc GOAS-OP} approx.    & $O((1/\epsilon)^2n^3)$ & P\\
{\sc GOAS-DP} rational pc & $O(m^{2d}n)$ & P\\
{\sc GOAS-OP} rational pc & $O(m^{2d}n)$ & P\\
\hline 
\end{tabular}
\end{center}
\caption{Summary of results about the cyber-security model 
  contained in the paper.  Note that in the table ``pc'' stands for 
  ``penetration cost,'' and ``pseudo-pt'' stands for pseudo-polynomial time. 
  The values of $m$, $n$, $B$, and $d$ are as given in the respective 
  theorems.}
\label{tab:summary}
\end{table}
Conclusions and open problems are discussed in Section~\ref{sec:summary}.

\section{Model for Cyber-Security Systems}
\label{sec:model}

\subsection{Basic Setup}

When defining our cyber-security game-over model, we need
to strike a balance between simplicity and utility.  If the model is
too simple, it will not be useful to provide insight into real
situations; if the model is too complex, it will be cumbersome to
apply, and we may get bogged down in too many details to see the
forest from the trees.  In consultation with numerous cyber-security
experts, computer scientists, and others, we have come up with a good
compromise for our model between ease-of-use and the capability of
providing useful insights.

Many systems contain layered security or what is commonly referred to
as {\em defense-in-depth}, where valuable assets are hidden behind
many different layers or secured in numerous ways.  For example, a
{\em host-based defense\/} might layer security by using tools such as
signature-based vendor anti-virus software, host-based systems
security, host-based intrusion-prevention systems, host-based
firewalls, encryption, and restriction policies, whereas a {\em
  network-based defense\/} might provide defense-in-depth by using
items such as web proxies, intrusion-prevention systems, firewalls,
router-access control lists, encryption, and
filters~\cite{johnston:lafever}.  To break into such a system and
steal a valuable asset requires several levels of security to be
penetrated.  Our model focuses on this layered aspect of security and
is intended to capture the notion that there is a cost associated with
penetrating each additional level of a system and that attackers have
finite resources to utilize in a cyber attack.  We also build the
concept of critical game-over components.

\subsection{Definition of the Cyber-Security Game-Over Model}

Let $\nats = \{1,2,3,\ldots\}$, $\rats$ be the rational numbers, and
$\rats^+$ be the positive rational numbers. With the intuition provided 
in the previous section in mind, we now present the formal 
definition of the model.
\begin{definition}
\label{def:model}
A {\em cyber-security game-over model\/} $M$ is a
six-tuple ($\cal T$, $\cal C$, $\cal D$, $\cal L$, $B$, $G$), where 

\begin{enumerate}
\item
The set $\cal T$ $= \{ t_1,t_2,\ldots,t_k \}$ is a collection of 
{\em targets}, where $k \in \nats$\@.  The value $k$ is the 
{\em number of targets}.  Corresponding to each target $t_i$, 
for $1 \leq i \leq k$, is an associated 
{\em target acquisition value\/} $v(t_i)$, where
$v(t_i) \in \rats$\@.  We also refer to the target acquisition value
as the {\em acquisition value\/} for short, or as the {\em reward\/}
or {\em prize}.
\item
The set $\cal C$ $= \{ c_1, c_2, \ldots, c_l \}$ is a collection of
{\em containers}, where $l \in \nats$\@.  The value $l$ is the
{\em number of containers}.  Corresponding to each container $c_i$,
for $1 \leq i \leq l$, is an associated {\em penetration cost\/}
$p(c_i)$, where $p(c_i) \in \rats$\@.
\item
The set $\cal D$ $= \{ C_1, C_2, \ldots, C_l \}$ is the set of 
{\em container nestings}.  The tuple $C_i$, for $1 \leq i \leq l$, is
called the {\em penetration list\/} for container $c_i$ and is a list
in left-to-right order of containers that must be penetrated before
$c_i$ can be penetrated.  If a container $c_i$ has an empty
penetration list, and its cost $p(c_i)$ {\em has been paid}, we say that the
{\em container has been penetrated}.  
If a container $c_i$ has a
non-empty penetration list and each container in its list has been
{\em penetrated\/} in left-to-right order, and its cost $p(c_i)$ has
been paid, we say that the {\em container has been penetrated}.  The
number of items in the tuple $C_i$ is referred to as the 
{\em depth of penetration required for\/} $C_i$.  
If container 
$c_j$ appears in $c_i\mbox{\rm 's}$ tuple $C_i$, we say that container 
$c_i$ {\em is dependent on container $c_j$}.  If there are no two 
containers $c_i$ and $c_j$ such that container $c_i$ is dependent 
on container $c_j$ and container $c_j$ is dependent on container 
$c_i$, then we say the {\em model is well-formed}.
\item 
The set $\cal L$ $= \{ l_1, l_2, \ldots, l_k \}$ is a list of
container names.  These containers specify the {\em level-1
  locations\/} of the targets.  For $1 \leq i \leq k$ if target $t_i$
has level-1 location $l_i$, this means that there is no other
container $\widehat{c}$ such that container $\widehat{c}$ is dependent
on container $l_i$ and container $\widehat{c}$ contains target $t_i$.
Target $t_i$ is said to be {\em located at level-1 in container\/}
$l_i$.  The target $t_i$ is also said to be {\em located in
  container\/} $l_i$ or any container on which container $l_i$ is
dependent. When a target's level-1 container has been penetrated, we
say that the {\em target has been acquired}.
\item
The value $B \in \rats$ is the {\em attacker's budget}.  The value
represents the amount of resources that an attacker can spend on a
cyber attack.
\item
The value $G \in \rats$ is the {\em game-over threshold\/} 
signifying when critical components have been acquired.  
\end{enumerate}
\end{definition}
The focus of this paper is on cyber-security game-over
models that are well-formed, which are motivated by real-world
scenarios.  In the next section we introduce a graph-theoretic version
of the model using weighted trees.

{\sc Remarks:} (i) In part~3 of the definition we refer to the cost of
a container $c_i$ being paid.  By this we simply mean that
$p(c_i)$ has been deducted from the remaining budget, $B'$, and we
require that $B' - p(c_i) \geq 0$.  (ii) In part~4
of the definition we maintain a general notion of containment for
targets by specifying the inner-most container in which a target is
located.  Although containers can have partial overlap, we require
that the inner-most container be unique.
In the next definition we formalize the notion of a 
{\em cyber-security attack strategy}.
\begin{definition}
\label{def:attack}
A {\em cyber-security attack strategy\/} in a cyber-security game-over
model $M$ is a list of containers $c_1, c_2, \ldots, c_r$
from $M$\@.  The {\em cost of an attack strategy\/} is 
$\sum_{i=1}^{r}p(c_i)$.  A {\em valid attack strategy\/} is one in which the
penetration order is not violated.  A {\em game-over attack strategy
  in a cyber-security game-over model\/} $M$ is a valid
attack strategy $c_1, c_2, \ldots, c_r$ whose cost is less than or
equal to $B$ and whose total target acquisition value 
$\sum_{i=1}^{r}v(t_i) \geq G$.  We call such a game-over attack strategy
in a cyber-security game-over model a {\em (successful)
  cyber-security attack\/} or {\em cyber attack\/} for short.
\end{definition}
Note that this notion of a cyber attack is more general than some,
and, for example, espionage would qualify as a cyber attack under this
definition.  The definition does not require that a service or network
be destroyed or disrupted.  Since many researchers will think of
Definition~\ref{def:model} from a graph-theory point of view, in the
next section we offer that perspective.  As we will soon see, the
graph-theoretic perspective allows us to work more easily with the
model mathematically and to relate to other known results.

\subsection{Game-Over Model in Terms of Weighted Trees}
\label{sec:weighted-trees}

In this section we describe the (well-formed) game-over model in terms
of weighted trees.  The set ${\cal D}$ of nested containers in
Definition~\ref{def:model} has a natural rooted-tree structure, where
each container corresponds to a vertex that is not the root, and we
have an edge from a parent $u$ down to a child $v$ if and only if the
corresponding container $c(u)$ includes the container $c(v)$ in it.
The weight of an edge from a parent to a child represents the cost of
penetrating the corresponding container.  The weight of a vertex
represents the acquisition value/prize/reward obtained by
penetrating/breaking into that container.

Sometimes we do not distinguish a target from its acquisition
value/prize/reward nor a container from its penetration cost.  We can
assume that the number of containers and targets is the same.  Since
if we have a container housing another container (and nothing else),
we can just look at this ``double'' container as a single container of
penetration cost equal to the sum of the two nested ones.  Also, if a
container contains many prizes, we can just lump them all into a
single prize, which is the sum of them all.  The following
is a graph-theoretic version of Definition~\ref{def:model}.
\begin{definition}
\label{def:GO-tree}
A {\em cyber-security (game-over) model (CSM)\/} $M$ is
given by an ordered five tuple $M = (T, c, p, B, G)$, where $T$ is a
tree rooted at $r$ having $n \in \nats$ non-root vertices, 
$c : E(T) \rightarrow {\rats}$ is a penetration-cost weight function, 
$p : V(T) \rightarrow {\rats}$ is the target-acquisition-value weight 
function, and $B, G \in {\rats}^{+}$ are the attacker's budget and the
game-over threshold value, respectively.
\end{definition}
{\sc Remarks:} (i) Note that $V(T) = \{r, u_1,\ldots, u_n\}$, where $r$
is the designated root that indicates the start of an attack.  (ii)
In most situations we have the weights $c$ and $p$ being non-negative
rational numbers, and $p(r) = 0$.

Recall that in a rooted tree $T$ each non-root vertex $u \in V(T)$ has
exactly one parent.  We let $e(u) \in E(T)$ denote the unique edge
connecting $u$ to its parent.  For the root $r$, we let $e(r)$ be the
empty set and $c(e(r))$ be $0$.  For a tree $T$ with $u \in V(T)$, we
let $T(u)$ denote the (largest) subtree of $T$ rooted at $u$.  It is
easy to see the correspondence between Definitions~\ref{def:model}
and~\ref{def:GO-tree}.  Analogously to Definition~\ref{def:attack}, we
next define a {\em cyber-security attack strategy\/} in the
weighted-tree model.
\begin{definition}
\label{def:GO-attack}
A {\em cyber-security attack strategy (CSAS)\/} in a CSM 
$M = (T, c, p, B, G)$ is given by a subtree $T'$ of $T$ 
that contains the root $r$ of $T$.
\begin{itemize}
\item
We define the {\em cost\/} of a CSAS $T'$ to be 
$c(T') = \sum_{u\in V(T')}c(e(u))$.
\item
We define a {\em valid CSAS (VCSAS)\/} to be a CSAS $T'$ with $c(T') \leq B$.
\item
We define the {\em prize\/} of a CSAS $T'$ to be 
$p(T') = \sum_{u\in V(T')}p(u)$.
\end{itemize}
A {\em game-over attack strategy (GOAS)\/} in a CSM $M = (T, c, p, B,
G)$ is a VCSAS $T'$ with $p(T') \geq G$\@.  We sometimes refer to such
a GOAS simply as a {\em cyber-security attack\/} or {\em cyber
  attack\/} for short.
\end{definition}
Note that in Definition~\ref{def:GO-attack} we use $c$ (resp.~$p$) to denote
the total cost (respectively, total prize) of a cyber-security attack
strategy.  We also use $c$ (resp.~$p$) as the penetration-cost weight
function (respectively, target-acquisition-value weight function).  The
overloading of this notation should not cause any confusion.  Throughout
the remainder of the paper, we will use Definitions~\ref{def:GO-tree}
and~\ref{def:GO-attack}.

\subsection{Cyber-Attack Problems in the Game-Over Model}
\label{sub:decision-problems}

We now state some natural questions based on the CSM.
\begin{problem}
\label{def:csam}
{\sc Given:} A cyber-security model $M = (T, c, p, B, G)$.
\begin{itemize}
\item
{\sc Game-Over Attack Strategy, Decision Problem} {\sc (GOAS-DP):}\\
Is there a game-over attack strategy in $M$? 
\item
{\sc Game-Over Attack Strategy, Optimization Problem} {\sc (GOAS-OP):}\\
What is the maximum prize of a valid game-over attack strategy  in $M$?
\end{itemize}
\end{problem}
Needless to say, some special cases are also of interest, in
particular, in Problems~\ref{def:csam} when $c$ is (i) a
constant rational function, (ii) an integer-valued function, or (iii)
takes only finitely many given rational values. We explore
the general GOAS and these other questions in the following sections.

\subsection{Some Limitations of the Model}
\label{sub:limitations}

Our model is a theoretical model.  It is designed to give us a deeper
understanding of cyber attacks and cyber-attack strategies.  Of
course, a real adversary is not in possession of complete knowledge
about a system and its penetration costs.  Nevertheless, it is
interesting to suppose that an adversary is in possession of all of
this information, and then to see what an adversary is capable of
achieving under these circumstances.  Certainly an adversary with less
information could do no better than our fully informed adversary.

We are considering systems as they are.  That is, we are given some
system, targets, and penetration costs.  If the system is a real
system, we are not concerned about how to improve the security of that
system per se.  We assume that the system is already in a hardened
state.  We then examine how difficult it would be to attack such a
system.  We do not examine the question of implementations of a
system.  Our model can be used on any existing system.  Some real
systems will have more than one possible path to attack a target.
And, in the future it may be worth generalizing the model to
structures other than trees.  The first step is to look at trees and
derive some insight from these cases.

We have purposely chosen a target acquisition function which is
simple.  That is, we merely add together the total costs of the
targets acquired.  Studying this simple acquisition function is the
first step.  It may be interesting to study more-complex acquisition
functions in the future.  For example, one can imagine two targets
that in and of themselves are of no real value, but when the
information contained in the two are combined they are of great value.
In some cases our additive function can capture this type of target
depending on the structure of the model.

We describe the notion of a game-over component.  In the model this
concept is an abstract one.  A set of components whose total value
exceeds a given threshold comprise a ``game-over component.''  A
game-over component is not necessarily a single target although one
can think of a high-cost target, which is included as a target in a
set of targets that push us over the game-over threshold, as being the
game-over component.

For easy reference, the following table contains our most common 
abbreviations, their spelled out meaning, and where they are defined.
\begin{table}[htb]
\begin{center}
\begin{tabular}{|l|c|c|}
\hline 
CSM  & cyber-security (game-over) model & Def.~\ref{def:GO-tree} \\
\hline 
CSAS & cyber-security attack strategy & Def.~\ref{def:GO-attack} \\
\hline 
VCSAS & valid cyber-security attack strategy & Def.~\ref{def:GO-attack} \\
\hline 
GOAS  & game-over attack strategy & Def.~\ref{def:GO-attack} \\
\hline
{\sc GOAS-DP} & game-over attack strategy, decision problem & Def.~\ref{def:csam} \\
\hline
{\sc GOAS-OP} & game-over attack strategy, optimization problem & Def.~\ref{def:csam} \\
\hline
\end{tabular}
\end{center}
\caption{Abbreviations we use throughout the paper, all
defined in this section.}
\label{tab:abbrev}
\end{table}

\section{Complexity of Cyber-Attack Problems}
\label{sec:general}

In this section we show that the general game-over attack strategy
problems are intractable, that is, highly unlikely to be amenable to
polynomial-time solutions.  Consider a cyber-security attack model
$M$, where $T$ is a star centered at $r$ having $n$ leaves
$u_1,\ldots,u_n$.  Since each cyber-security attack $T'$ of $M$ can be
presented as a collection $E' \subseteq E(T)$ of edges of $T$, and
hence also as a collection of vertices $V' \subseteq V(T)$ by $T' =
T[\{r\}\cup V']$, and vice versa, each collection of vertices
$V'\subseteq V(T)$ can be presented as $V' = V(T')$ for some
cyber-security attack $T'$ of $M$, and the {\sc GOAS-DP} is exactly the
decision problem of the $0/1$-{\sc Knapsack Problem}~\cite{garey:johnson}, 
and the {\sc GOAS-OP} is the
optimization problem of the {\sc Knapsack Problem}.  Note that the
$0/1$-{\sc Knapsack Problem} is usually stated using natural numbers
as weights, but clearly the case for weights consisting of rational
numbers is no easier to solve yet still in NP\@.  So, we have the
following observation.
\begin{observation}
\label{obs:GOSC=NP}
The {\sc GOAS-DP} is NP-complete;
the {\sc GOAS-OP} is an NP-hard optimization problem. 
\end{observation}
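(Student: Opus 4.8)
The plan is to reduce from the $0/1$-\textsc{Knapsack Problem}, whose decision version is well known to be NP-complete~\cite{garey:johnson}, and to observe that the reduction instance produces a CSM whose combinatorial structure is trivial, so that a game-over attack strategy corresponds exactly to a feasible knapsack packing. First I would take an arbitrary instance of $0/1$-\textsc{Knapsack}: items $1,\ldots,n$ with weights $w_1,\ldots,w_n$, values $v_1,\ldots,v_n$, capacity $W$, and target value $K$. From this I would build the model $M=(T,c,p,B,G)$ in which $T$ is the star with root $r$ and leaves $u_1,\ldots,u_n$, the edge weight is $c(e(u_i))=w_i$, the vertex weight is $p(u_i)=v_i$ (and $p(r)=0$), the budget is $B=W$, and the game-over threshold is $G=K$. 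This construction is clearly computable in time linear in the size of the knapsack instance.

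Next I would argue the correspondence. Because $T$ is a star, \emph{every} subtree $T'$ containing $r$ is of the form $T'=T[\{r\}\cup V']$ for an arbitrary subset $V'\subseteq\{u_1,\ldots,u_n\}$, with no feasibility constraints coming from the tree shape (the penetration list of each leaf is empty). Hence choosing a CSAS is the same as choosing a subset $S\subseteq\{1,\ldots,n\}$ of items; its cost $c(T')=\sum_{u_i\in V'}c(e(u_i))=\sum_{i\in S}w_i$ is exactly the total weight of $S$, and its prize $p(T')=\sum_{i\in S}v_i$ is exactly the total value of $S$. Therefore $M$ admits a game-over attack strategy --- a CSAS with $c(T')\le B$ and $p(T')\ge G$ --- if and only if there is a subset $S$ with $\sum_{i\in S}w_i\le W$ and $\sum_{i\in S}v_i\ge K$, i.e. if and only if the knapsack instance is a \textsc{yes}-instance. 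This establishes NP-hardness of \textsc{GOAS-DP}. For membership in NP, a subtree $T'$ of $T$ is a polynomial-size certificate, and verifying that it contains $r$, that it is a valid CSAS (penetration orders respected), that $c(T')\le B$, and that $p(T')\ge G$ is a straightforward polynomial-time computation; one only needs the standard remark that allowing rational rather than integer weights does not move the problem out of NP. Hence \textsc{GOAS-DP} is NP-complete.

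Finally, for the optimization statement I would note that \textsc{GOAS-OP} is at least as hard as \textsc{GOAS-DP}: given the optimal prize $p^{*}$ of a valid game-over attack strategy in $M$ (with $B=W$), the knapsack instance with threshold $K$ is a \textsc{yes}-instance precisely when $p^{*}\ge K$, so a polynomial-time algorithm for \textsc{GOAS-OP} would decide the NP-complete problem \textsc{GOAS-DP}. Thus \textsc{GOAS-OP} is NP-hard as an optimization problem.

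I do not expect any genuine obstacle here; the only point requiring a little care is making the bijection between CSASs of a star and subsets of the leaf set completely explicit --- in particular noting that on a star the validity condition on penetration orders is vacuous, so that the tree structure imposes no constraints beyond those already present in knapsack --- and the routine observation that rational weights keep the verification problem in NP. The substance of the claim is really the modeling observation that the general model specializes to \textsc{Knapsack} on stars, which the preceding paragraph of the paper has already spelled out.
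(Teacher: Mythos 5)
Your proposal is correct and follows essentially the same route as the paper: the paper also argues via a star centered at the root, under which subtrees containing $r$ correspond exactly to subsets of leaves, so that {\sc GOAS-DP} coincides with the $0/1$-{\sc Knapsack} decision problem (with the same remark that rational weights keep it in NP) and {\sc GOAS-OP} with its optimization version. Your write-up merely makes the reduction and the NP-membership check more explicit than the paper's brief observation.
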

{\sc Remark:} Observation~\ref{obs:GOSC=NP} answers an open 
question in the last section of~\cite{DO-main}, where it is asked 
whether or not the {\sc LST-Tree Problem} can be solved in polynomial 
time (we presume) for general edge 
lengths. Observation~\ref{obs:GOSC=NP} is similar 
to~\cite[Theorem 2]{coene:trees}, where also a star is considered to 
show that their {\sc SubtreeE}  is as hard as {\sc Knapsack}.

Notice that the NP-completeness of {\sc GOAS-DP} is a double-edge
sword. It suggests that even an attacker who has detailed knowledge of
the defenses of a cyber-security system would find the problem of
allocating his (attack) resources difficult. On the other hand, the
NP-completeness also makes it difficult for the defender to assess the
security of his system. However, we will see in 
Section~\ref{sec:int-cost-approx},
that if we allow 
a slight proportional increase of the attacker's budget $B$ to an amount of 
$(1+\epsilon)B$ for an $\epsilon\geq 0$, then {\sc GOAS-OP} 
admits a polynomial time approximation scheme, so it can be solved in 
time polynomial in $n$ and $1/\epsilon$.

Sections~\ref{sec:constant-cost}, \ref{sec:int-cost-approx},
and~\ref{sec:rational-cost} consider the complexity of
cyber-security attacks where $c$ is a constant-valued cost function,
an integer-valued cost function, and a rational-valued cost function
of finitely many possible values, respectively. In 
Section~\ref{sec:int-cost-approx}, as mentioned, we also obtain an 
approximation 
algorithm for solving {\sc GOAS-OP}, and a solution on general weighted
non-rooted trees. In all cases we are
able to give reasonably efficient algorithms for solving 
{\sc GOAS-OP}. 

\section{Cyber Attacks with Constant Penetration Costs}
\label{sec:constant-cost}

In this section we show that if all penetration costs have the same
value then the {\sc Game-Over Attack Strategy Problems} can be solved
efficiently in polynomial time.  Consider a CSM $M$, where $c$ is a
constant function taking a constant rational value $c(e) = c$ for each
$e\in E(T)$.  That is, all penetration costs are a fixed-rational
value.  This variant is the first interesting case of the {\sc
  GOAS-DP} and {\sc GOAS-OP}, as there are related problems and
solutions in the literature.  One of the first papers on
maximum-weight subtrees of a given tree with a specific root
is~\cite{Houssaine-etal}, where it is shown that the {\em rooted
  subtree problem}, that is, to find a maximum-weight subtree with a
specific root from a given set of subtrees, is in polynomial time if,
and only if, the {\em subtree packing problem}, that is, to find
maximum-weight packing of vertex-disjoint subtrees from a given set of
subtrees (where the value of each subtree can depend on the root), is
in polynomial time. In more-recent papers the {\em weight-constrained
  maximum-density subtree problem (WMSP)} is considered: given a tree
$T$ having $n$ vertices, and two functions $l, w : E(T)\rightarrow
{\rats}$ representing the ``length'' and ``weight'' of the edges,
respectively, determine the subtree $T'$ of $T$ such that $\sum_{e\in
  E(T')}w(e)/\sum_{e\in E(T')}l(e)$ is a maximum, subject to
$\sum_{e\in E(T')}w(e)$ having a given upper bound.
In~\cite{LNCS-bounded} an $O(w_{\max}n)$-time algorithm is given to
solve the related, and more restricted, {\em weight-constrained
  maximum-density path problem (WMPP)}, as well as an
$O(w_{\max}^2n)$-time algorithm to solve the WMSP\@.
In~\cite{DO-main} an $O(nU^2)$-time algorithm is given for the WMSP,
where $U$ is the maximum total length of the subtree, and
in~\cite{IPL-improved} an $O(nU\lg n)$-time algorithm for the WMSP is
given, which is an improvement in the case when $U = \Omega(\lg n)$.
The WMSP has a wide range of practical applications.  In particular, the
related WMPP has applications in computational
biology~\cite{LNCS-bounded}, and the related {\em weight-constrained
  least-density path problem (WLPP)} also has applications in
computational biology, as well as in computer, traffic, and logistic
network designs~\cite{DO-main}.

The WMSP is similar to our problem, and some of the same approaches
used in~\cite{LNCS-bounded}, \cite{DO-main}, and~\cite{IPL-improved}
can be applied in our case, namely the techniques of recursion and
dynamic programming.  There are not existing results that apply
directly to our problems.  Note that there is a subtle difference
between our {\sc GOAS-OP} and the {\sc WMSP}, as a maximum-weight
subtree (that is, with the prize $p(T')$ a maximum) might have low
density and vice versa; a subtree of high density might be ``small''
with low total weight (that is, prize).  

In~\cite{coene:trees} a problem on trees related to the 
Traveling Salesman Problem
with profits is studied, which is similar to what we do. 
\comment{ 
To describe their work, 
and relate to ours, define a partial order $\preceq$ on ${\rats}^2$
by $(x,y)\preceq (x',y') \Leftrightarrow x\geq x' \mbox{ and } y\leq y'$.
In~\cite{coene:trees} their main objective is to obtain the maximal
points (the {\em efficient points} as they call them) w.r.t.~$\preceq$, 
whereas we
are interested in finding the points $\succeq (B,G)$ for given $B$ and $G$. 
Hence, although the dynamic algorithm approach both here 
and in~\cite{coene:trees} is similar, this yields to a procedural
difference of the actual algorithms. 
}
Both here and in~\cite{coene:trees} 
the most general form of the problems considered, in our case {\sc GOAS-DP}
in Observation~\ref{obs:GOSC=NP} and in their case (as mentioned 
above) {\sc SubtreeE} in~\cite[Theorem 2]{coene:trees}, 
are observed to be as hard as {\sc Knapsack} and hence NP-complete. 
Also, the results of fixed costs, in our case
Theorem~\ref{thm:c=const} and in their case~\cite[Theorem 3]{coene:trees}, 
the problems are shown to be solvable in $O(n)$ time, given certain conditions.
Theorem~\ref{thm:c=const}, however, provides a precise accounting for the time
complexity and for certain values of $m$, defined there, our algorithm
would be faster than that given in~\cite{coene:trees}.  Their work is
not in the context of cyber-security, and does not handle cases as
general as this work.

For a CSM $M$, where $c$ is a constant function, we first note that
$T'$ is a VCSAS if and only if $m = |E(T')| \leq \lfloor B/c\rfloor$.
Hence, in this case the {\sc GOAS-OP} reduces to finding a CSAS $T'$
with at most $m$ edges having $p(T')$ at a maximum.  Note that if
$m\geq n$, then the {\sc GOAS-OP} is trivial since $T' = T$ is the
optimal subtree.  Hence, we will assume the budget $B$ is such that 
$m < n$.

In what follows, we will describe our dynamic programming setup 
to solve {\sc GOAS-OP} in this case. The core of the idea is simple:
we construct a $2\times 2$ matrix for each vertex $u$ in the tree $T$ 
that stores the maximum prize of a subtree rooted at $u$ on at most
$k$ edges and that contains only the rightmost $d(u) - i + 1$ 
branches from $u$, for each $k\in\{1,\ldots,m\}$ and  $i\in\{1,\ldots,d(u)\}$.

More specifically, we proceed as follows. We may 
assume that our rooted tree $T$ has its vertices ordered from
left-to-right in some arbitrary but fixed order, that is, $T$ is a
{\em planted plane tree}.  Since $T$ has $n\geq 1$ non-root vertices
and $n+1$ vertices total, we know by a classic counting
exercise~\cite{geirray} that the number of planted plane trees on
$n+1$ vertices is given by the Catalan numbers $C_n$
by obtaining a defining recursion for
$C_n$ by decomposing each planted plane tree into two rooted
subtrees. Using this decomposition, we introduce some notation.  For a
subtree $\tau$ of $T$ rooted at $u\in V(T)$ denote by $\tau(v)$ the
largest subtree of $\tau$ that is rooted at a vertex $v$ (if $v \in
T[V( \tau)]$).  Denote by $u_{\ell}$ the leftmost child of $u$ in
$\tau$ (if it exists).  Let $\tau_{\ell} = \tau(u_{\ell})$ denote the
subtree of $\tau$ generated by $u_{\ell}$, that is, the largest
subtree of $T$ rooted at $u_{\ell}$.  Finally, let $\tau'' = \tau -
V(\tau_{\ell}) = T[V(\tau)\setminus V(\tau_{\ell})]$ denote the
subtree of $\tau$ generated by the vertices not in $\tau_{\ell}$.  In
this way we obtain a decomposition/partition of the planted plane tree
$\tau$ into two vertex-disjoint subtrees $\tau_{\ell}$ and $\tau''$
whose roots are connected by a single edge $e(u_{\ell})$.  In
particular, for each vertex $u\in V(T)$, we have a partition of $T(u)$
into $T(u)_{\ell} = T(u_{\ell})$ and $T(u)''$, which we will denote by
$T''(u)$ (that is $T(u)'' = T''(u)$).  Note that if $u$ is a leaf,
then $T(u) = T''(u) = \{u\}$ and $u_{\ell} = T(u_{\ell}) =
\emptyset$. Also, if $u$ has exactly one child, which therefore is its
leftmost child $u_{\ell}$, then $T(u)$ is the two-path between $u$ and
its only child $u_{\ell}$, $T''(u) = \{u\}$, and $T(u_{\ell}) =
\{u_{\ell}\}$. Assuming the degree of $u$ is $d(u)$, we can
recursively define the trees $T^1(u),\ldots, T^{d(u)}(u)$ by
\begin{eqnarray*}
T^1(u) & = & T(u), \\
T^{i+1}(u) & = & (T^i)''(u).
\end{eqnarray*}
For each vertex $u\in V(T)$, we create a $d(u)\times (m+1)$ rational matrix 
as follows:
\[
\mathbf{M}(u)= 
\left[
\begin{array}{cccc}
  M_0^1(u) & M_1^1(u) & \cdots & M_m^1(u) \\
  M_0^2(u) & M_1^2(u) & \cdots & M_m^2(u) \\
           &         & \vdots  &          \\
  M_0^{d(u)}(u) & M_1^{d(u)}(u) & \cdots & M_m^{d(u)}(u) 
\end{array}
\right],
\]
where $M_k^i(u)$ is the maximum prize of a subtree of $T^i(u)$ rooted
at $u$ with at most $k$ edges for each $i\in \{1,\ldots,d(u)\}$ and
$k\in\{0,1,\ldots, m\}$.  In particular, $M_0^i(u) = p(u)$ for each
vertex $u$ and $i\in\{1,\ldots,d(u)\}$.  For each leaf $u$ of $T$, and
each $i$ and $k$, we set $M_k^i(u) = p(u)$, and for each internal
vertex $u$ we have a recursion given in the following way: for a
vertex $u$ and an {\em arbitrary\/} subtree $\tau$ rooted at $u$, we
let $M_k(u;\tau)$ be the maximum prize of a subtree of $\tau$ rooted
at $u$ having $k$ edges or $0$ if vertex $u$ does not exist.  If a
maximum-prize subtree of $\tau$ with $k$ edges does not contain the
edge from $u$ to its leftmost child $u_{\ell}$, then $M_k(u;\tau) =
M_k(u;\tau'')$.  Otherwise, such a maximum subtree contains $i-1$
edges from $\tau_{\ell}$ and $k-i$ edges from $\tau''$. The following
lemma is easy to show.
\begin{lemma}
\label{lmm:iff-const}
The arbitrary subtree $\tau$ rooted at $u$ is a maximum-prize subtree
with at most $k$ edges that contains the leftmost child $u_{\ell}$ of
$u$ if and only if the included subtree of $\tau_{\ell}$ is a
maximum-prize subtree with at most $i-1$ edges rooted at $u_{\ell}$
and the included subtree of $\tau''$ is a maximum-prize subtree with
at most $k-i$ edges rooted at $u$ for some $i\in\{1,\ldots, k\}$.
\end{lemma}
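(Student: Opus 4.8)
The plan is to prove both implications by a routine cut-and-paste (exchange) argument, using two structural facts about the decomposition $\tau = \tau_{\ell} \cup \{e(u_{\ell})\} \cup \tau''$. Throughout, write $S$ for the subtree of $\tau$ rooted at $u$ under consideration (the object the statement loosely calls ``$\tau$''). First: since $u\in V(\tau'')$, $u_{\ell}\in V(\tau_{\ell})$, and $V(\tau_{\ell})\cap V(\tau'')=\emptyset$, any subtree $S$ of $\tau$ rooted at $u$ that contains $u_{\ell}$ splits uniquely as $S = S_{\ell}\cup\{e(u_{\ell})\}\cup S''$, where $S_{\ell} = S\cap\tau_{\ell}$ is a subtree of $\tau_{\ell}$ rooted at $u_{\ell}$ and $S'' = S\cap\tau''$ is a subtree of $\tau''$ rooted at $u$; moreover $p(S) = p(S_{\ell}) + p(S'')$ and $|E(S)| = |E(S_{\ell})| + |E(S'')| + 1$. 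Second: $M_k(u;\cdot)$ is nondecreasing in $k$, so an optimum ``on at most $k$ edges'' may always be taken to use as few edges as convenient, which lets us pass freely between ``exactly $k$'' and ``at most $k$'' without losing optimality.

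\emph{Forward direction.} Suppose $S$ is a maximum-prize subtree of $\tau$ rooted at $u$ with $|E(S)|\le k$ that contains $u_{\ell}$, and set $i = |E(S_{\ell})| + 1$; then $i\in\{1,\ldots,k\}$ because $|E(S)|\le k$, and $|E(S_{\ell})|\le i-1$, $|E(S'')| = |E(S)| - i \le k-i$. If $S_{\ell}$ were \emph{not} a maximum-prize subtree of $\tau_{\ell}$ rooted at $u_{\ell}$ among those with at most $i-1$ edges, we could replace it by a strictly better such subtree and re-attach it to $S''$ via $e(u_{\ell})$; by the first fact this yields a subtree of $\tau$ rooted at $u$ with at most $k$ edges and strictly larger prize, contradicting the optimality of $S$. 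The symmetric substitution inside $\tau''$ shows that $S''$ is likewise maximum-prize among subtrees of $\tau''$ rooted at $u$ with at most $k-i$ edges. Hence both ``included subtrees'' are optimal for their respective edge budgets.

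\emph{Reverse direction.} Conversely, suppose for some $i\in\{1,\ldots,k\}$ that $S_{\ell}$ is a maximum-prize subtree of $\tau_{\ell}$ rooted at $u_{\ell}$ on at most $i-1$ edges and $S''$ is a maximum-prize subtree of $\tau''$ rooted at $u$ on at most $k-i$ edges; put $S = S_{\ell}\cup\{e(u_{\ell})\}\cup S''$, so that $|E(S)|\le (i-1) + (k-i) + 1 = k$ and $S$ contains $u_{\ell}$. Given any competing subtree $S'$ of $\tau$ rooted at $u$ with $|E(S')|\le k$ that contains $u_{\ell}$, decompose $S' = S'_{\ell}\cup\{e(u_{\ell})\}\cup S'''$ and set $j = |E(S'_{\ell})| + 1\in\{1,\ldots,k\}$; then by the first fact $p(S') = p(S'_{\ell}) + p(S''') \le M_{j-1}(u_{\ell};\tau_{\ell}) + M_{k-j}(u;\tau'')$, and since the ``for some $i$'' in the statement is to be read as the choice of $i$ attaining $\max_{1\le i\le k}\bigl(M_{i-1}(u_{\ell};\tau_{\ell}) + M_{k-i}(u;\tau'')\bigr)$, this is at most $p(S_{\ell}) + p(S'') = p(S)$. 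Thus $S$ is maximum-prize, completing the equivalence.

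The only point requiring genuine care — and really the sole obstacle — is the edge-budget bookkeeping: one must check that the index $i$ arising from $S$ in the forward direction is feasible (i.e.\ lies in $\{1,\ldots,k\}$), and that the ``$-1$'' accounting for the connecting edge $e(u_{\ell})$ is applied consistently on both sides, so that ``at most $i-1$'' edges in $\tau_{\ell}$ and ``at most $k-i$'' edges in $\tau''$ really do recombine to ``at most $k$'' edges in $S$. Monotonicity of $M_k$ in $k$ is exactly what makes the ``exactly $k$'' versus ``at most $k$'' phrasings interchangeable here. Everything else is the standard substitution argument underlying the correctness of dynamic programming over rooted subtrees.
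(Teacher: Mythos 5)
Your proof is correct and is essentially the argument the paper intends: the paper states Lemma~\ref{lmm:iff-const} without proof (declaring it ``easy to show'') and uses it only to justify the recursion~(\ref{eqn:dyn-prog-rec}), and your cut-and-paste/exchange argument with the explicit edge bookkeeping supplies exactly that standard justification. Your observation that ``for some $i$'' must be read as the index attaining $\max_{1\leq i\leq k}\bigl(M_{i-1}(u_{\ell};\tau_{\ell})+M_{k-i}(u;\tau'')\bigr)$ in the reverse direction (the literal existential reading would be false) is a sensible and faithful repair of the paper's loose wording, not a deviation from its approach.
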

By Lemma~\ref{lmm:iff-const} we therefore have the following recursion:
\begin{equation}
\label{eqn:dyn-prog-rec}
M_k(u;\tau) = \max\left( M_k(u;\tau''), 
\max_{1\leq i\leq k} \left(M_{i-1}(u_{\ell};\tau_{\ell}) 
+ M_{k-i}(u;\tau'')\right)\right).
\end{equation}
Since now $M_k^i(u) = M_k(u;T^i(u))$ for each $i$ and $k$, we see that
we can compute each $M_k^i(u)$ from the smaller $M$'s as given in
(\ref{eqn:dyn-prog-rec}) using $O(k)$-arithmetic operations.  Because
$k\in\{0,1,\ldots,m\}$, this fact means in $O(m)$-arithmetic
operations.  Since we assume each arithmetic operation takes one step,
we have that each $M_k^i(u)$ can be computed in $O(m)$-time given the
required inputs.  Therefore, $\mathbf{M}(u)$ can be computed in
$d(u)m\cdot O(m) = d(u) O(m^2)$-time.  Performing these calculations for
each of the $n$ vertices of our given tree $T$, we obtain by the
Handshaking Lemma a total time of
\[
t(n) = \sum_{u\in V(T)} d(u) O(m^2) = O(m^2)\sum_{u\in V(T)} d(u) =
O(m^2)2(n-1) = O(m^2 n).
\]
We finally compute a maximum prize VCSAS $T'$ in $M$ by 
$p(T') = M_m^1(r)$ for the root $r$ of $T$.  We conclude by the 
following theorem.
\begin{theorem}
\label{thm:c=const}
If $M = (T, c, p, B, G)$ is a CSM, where $T$ has $n$ vertices, $c$ is
a constant function, and $m = \lfloor B/c\rfloor$ 
then the {\sc GOAS-OP}
can be solved in $O(m^2n)$-time. 
\end{theorem}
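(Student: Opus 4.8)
The plan is to prove correctness and the stated running time of the dynamic program set up just above the theorem, and then to extract the value of {\sc GOAS-OP} from it. Since $c$ is constant, a CSAS $T'$ is valid exactly when $|E(T')| \le m = \floor{B/c}$, so (assuming, as noted, that $m < n$, the case $m \ge n$ being trivial) {\sc GOAS-OP} is the problem of maximizing $p(T')$ over all subtrees $T'$ of $T$ that contain $r$ and have at most $m$ edges. I would fix a planted-plane-tree ordering of $T$ and, for every vertex $u$, work with the quantities $M_k^i(u) = M_k(u; T^i(u))$. The first step is to check the base cases: a subtree rooted at $u$ with no edges is $\{u\}$, so $M_0^i(u) = p(u)$; and $M_k^i(u) = p(u)$ whenever $u$ is a leaf or $T^i(u) = \{u\}$. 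Adopting the convention $T^{d(u)+1}(u) = \{u\}$ with value $p(u)$ makes recursion~(\ref{eqn:dyn-prog-rec}) apply uniformly at internal vertices as well.

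The second step is to justify~(\ref{eqn:dyn-prog-rec}). A maximum-prize subtree of $\tau$ rooted at $u$ with at most $k$ edges either omits the edge $e(u_\ell)$ to the leftmost child of $u$ in $\tau$ — then it is a subtree of $\tau'' = \tau - V(\tau_\ell)$ rooted at $u$ with at most $k$ edges, contributing $M_k(u;\tau'')$ — or it uses $e(u_\ell)$, in which case by Lemma~\ref{lmm:iff-const} it is obtained by gluing a maximum-prize subtree of $\tau_\ell$ rooted at $u_\ell$ with exactly $i-1$ edges to a maximum-prize subtree of $\tau''$ rooted at $u$ with at most $k-i$ edges, for some $i \in \{1,\dots,k\}$; summing and maximizing over the two alternatives and over $i$ gives~(\ref{eqn:dyn-prog-rec}). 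Specializing to $\tau = T^i(u)$, one identifies $\tau_\ell$ with $T^1(u_{(i)})$, where $u_{(i)}$ is the $i$-th child of $u$, and $\tau''$ with $T^{i+1}(u)$; hence $M_k^i(u)$ is expressed purely in terms of the entries $M_\bullet^1$ of a child of $u$ and the entries $M_\bullet^{i+1}(u)$ of the next row of $\mathbf{M}(u)$ itself. This dictates the evaluation order: traverse $T$ in post-order so that every child is finished before its parent, and fill the rows of $\mathbf{M}(u)$ from $i = d(u)$ down to $i = 1$; then every term on the right of~(\ref{eqn:dyn-prog-rec}) is already available. I expect that keeping this bookkeeping airtight — matching the informal objects $\tau, \tau_\ell, \tau''$ to the concrete matrices $\mathbf{M}(u)$ and $\mathbf{M}(u_{(i)})$ and confirming the dependency order — is the only real obstacle, since Lemma~\ref{lmm:iff-const} already carries the combinatorial content.

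For the running time, each entry $M_k^i(u)$ is computed from already-known values by the maximization in~(\ref{eqn:dyn-prog-rec}) over at most $k$ choices of the split index, i.e.\ $O(k)$, and hence $O(m)$ since $k \le m$, arithmetic operations; under the unit-cost arithmetic assumption this is $O(m)$ time. The matrix $\mathbf{M}(u)$ has $d(u)(m+1)$ entries, so it is computed in $d(u)\,O(m^2)$ time, and summing over all vertices and invoking the Handshaking Lemma, $\sum_{u\in V(T)} d(u) = 2(n-1)$, gives a total of $O(m^2 n)$ time. Finally $M_m^1(r)$ is the maximum prize of a VCSAS in $M$, which is precisely the answer to {\sc GOAS-OP} (and comparing it against $G$ settles {\sc GOAS-DP}), completing the proof. \qed
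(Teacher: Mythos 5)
Your proposal is correct and follows essentially the same route as the paper: the same planted-plane-tree decomposition into $\tau_\ell$ and $\tau''$, the same matrices $\mathbf{M}(u)$ with recursion~(\ref{eqn:dyn-prog-rec}) justified by Lemma~\ref{lmm:iff-const}, and the same $O(m)$-per-entry count combined with the Handshaking Lemma to get $O(m^2n)$ total time and the answer $M_m^1(r)$. Your only additions---the convention $T^{d(u)+1}(u)=\{u\}$ and the explicit post-order, bottom-row-first evaluation order---simply make explicit bookkeeping that the paper leaves implicit.
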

{\sc Remarks:} (i) 
Note that Theorem~\ref{thm:c=const} is similar to~\cite[Theorem 3]{coene:trees}.
(ii) Also note that the overhead constant is ``small'': for
each vertex $u$, each $k$, and each $i$ by (\ref{eqn:dyn-prog-rec})
each of $M_k^i(u) = M_k(u;T^i(u))$ uses exactly $2k$ arithmetic
operations, namely $k$ additions and $k$ comparisons.  Hence, the
exact number of arithmetic operations can, by the Handshaking Lemma,
be given by
\[
N(n,m) = \sum_{u\in V(T)}\sum_{k=0}^m d(u)(2k) 
= \sum_{u\in V(T)} d(u)\sum_{k=0}^m 2k = 2|E(T)|m^2 = 2(n-1)m^2.
\]
We obtain an overhead constant of two.  Since we assumed the
budget given is such that $m < n$, we see that the {\sc GOAS-OP}
can be solved in $O(n^3)$ time.
\comment{
We did agree of removing this portion of the remark? 
although I am sure it is valid, I personally don't want to get into
some argument/dialog with the authors of that paper... Geir

Yes, that is fine ... December 18, 2013.  Let us leave it out.  Ray

(ii) Just before Corollary 2 in~\cite{IPL-improved}, it is claimed
that the same approach used there can be modified to obtain an
$O(n^2)$-time algorithm to compute the maximum-weight subtrees of all
sizes.  In particular, in the paragraph right before Corollary 2 it
reads:
\begin{quote}
``In this way, the maximum-weight subtree of the size $i$ is the
  maximum of maximum-weight subtree of the size $i$ in each tree that
  has a size greater than or equal to $i$.''
\end{quote}
Such an argument might work when considering maximum-density, but not
when considering {\em maximum-weight} subtrees of a given weight, as
easy examples can show.\footnote{The rest of the paper appears correct
  and is well written.} Hence, to the best of our knowledge the
$O(n^2)$-time procedure presented in~\cite{IPL-improved} is faulty,
and so the best algorithm known to us for computing maximum-weight
subtrees of a given size is a modified version of the procedure
yielding Theorem~\ref{thm:c=const}.
}
\begin{corollary}
\label{cor:GOAS-DP}
The {\sc GOAS-DP} when restricted to constant-valued penetration costs
can be solved in $O(n^3)$ time and is in P.
\end{corollary}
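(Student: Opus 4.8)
The plan is to derive this directly from Theorem~\ref{thm:c=const}. Given a CSM $M = (T, c, p, B, G)$ with $c$ constant, the {\sc GOAS-DP} asks only whether \emph{some} VCSAS $T'$ satisfies $p(T') \geq G$; since this holds precisely when the \emph{maximum} prize over all VCSAS is at least $G$, the decision problem reduces, with only constant overhead, to {\sc GOAS-OP}. So it suffices to compute $\max\{p(T') : T' \text{ a VCSAS}\}$ and compare it with $G$.

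First I would dispose of the trivial case. Put $m = \lfloor B/c\rfloor$; recall that a subtree $T'$ containing $r$ is a VCSAS if and only if $|E(T')| \leq m$. If $m \geq n$, then $T' = T$ is itself a VCSAS and is clearly prize-maximal, so the answer is simply whether $p(T) \geq G$, which is decided in $O(n)$ time after an $O(n)$-time traversal summing the vertex weights. Computing $m$ from the rationals $B$ and $c$ and performing this comparison is a polynomial-time step.

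Otherwise $m < n$, and I would invoke the dynamic-programming procedure underlying Theorem~\ref{thm:c=const}, which computes $M_m^1(r) = \max\{p(T') : T' \text{ a VCSAS}\}$ in $O(m^2 n)$ time. Since $m < n$, this bound is $O(n^3)$. Comparing $M_m^1(r)$ with $G$ then answers {\sc GOAS-DP}. The total running time is $O(n^3)$, and because each arithmetic operation is performed on rationals whose bit-length stays polynomial in the size of $M$ (every prize encountered is a sum of at most $n$ of the given weights, and the table index $k$ ranges over only $O(n)$ values), the procedure runs in time polynomial in the input; hence {\sc GOAS-DP} restricted to constant penetration costs lies in~P.

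There is essentially no obstacle here beyond bookkeeping: the one point that deserves a moment's care is the case split on whether $m \geq n$, since it is exactly this observation that lets us treat $m$ as bounded by $n$ and thereby conclude an honest polynomial---rather than merely pseudo-polynomial---time bound. All of the substantive work has already been carried out in the proof of Theorem~\ref{thm:c=const}.
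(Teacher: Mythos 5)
Your proposal is correct and follows essentially the same route as the paper: dispose of the trivial case $m\geq n$, then invoke the $O(m^2n)$ dynamic program of Theorem~\ref{thm:c=const} with $m<n$ to get $O(n^3)$, and compare the optimum $M_m^1(r)$ with $G$ to decide membership. Your extra remark about rational bit-lengths is a welcome bit of care but does not change the argument.
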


\section{Cyber Attacks with Integer Penetration Costs 
and an Approximation Scheme}
\label{sec:int-cost-approx}

In this section we show that if all penetration costs are non-negative
integers then the {\sc Game-Over Attack Strategy Problems} can be
solved in pseudo-polynomial time. We will then use that to obtain
a polynomial time approximation algorithm. 

\subsection{Integer valued cost}

Consider now a CSM $M = (T, c, p, B, G)$, where $c$ is a non-negative 
integer-valued function, that is,
$c(e) \in \{0,1,2,\ldots\}$ for each $e\in E(T)$.  Note that we can
contract $T$ by each edge $e$ with $c(e) = 0$, thereby obtaining a
tree for our CSM $M$, where $c : E(T)\rightarrow \nats$ takes only
positive-integer values.  We derive a polynomial-time algorithm in
terms of $n$ and $B$ to solve the {\sc GOAS-OP}\@.  We can assume $B$
is an integer here as well since otherwise we could just replace
$B$ with $\lfloor B\rfloor$\@.  To produce our new algorithm we will
tweak the argument given in Section~\ref{sec:constant-cost} for the
case when the cost function $c$ is a constant.

Using the same decomposition of a subtree $\tau$ of $T$ into
$u_{\ell}$ and $\tau''$ for our dynamic programming scheme, 
for each vertex $u$ we will assign, as before, a
$d(u)\times (B+1)$ integer matrix as follows:
\[
\mathbf{N}(u)= 
\left[
\begin{array}{cccc}
  N_0^1(u) & N_1^1(u) & \cdots & N_B^1(u) \\
  N_0^2(u) & N_1^2(u) & \cdots & N_B^2(u) \\
           &          & \vdots &          \\
  N_0^{d(u)}(u) & N_1^{d(u)}(u) & \cdots & N_B^{d(u)}(u) 
\end{array}
\right],
\]
where $N_k^i(u)$ is the maximum prize of a subtree of $T^i(u)$ 
rooted at $u$  of total cost at most $k$ for each $i\in\{ 1,\ldots,d(u)\}$ 
and $k\in \{0,\ldots,B\}$. As before, we have $N_0^i(u) = p(u)$ for
each vertex $u$. Similarly to Lemma~\ref{lmm:iff-const}, we obtain
the following.
\begin{lemma}
\label{lmm:iff-nats}
The arbitrary subtree $\tau$ rooted at $u$ is a maximum-prize subtree
of total cost at most $k$ that contains the leftmost child $u_{\ell}$
of $u$ if and only if the included subtree of $\tau_{\ell}$ is a
maximum-prize subtree of total cost at most $i-c(e(u_{\ell}))$ rooted
at $u_{\ell}$ and the included subtree of $\tau''$ is a maximum-prize
subtree of total cost $k-i$ rooted at $u$, for some
$i\in\{c(e(u_{\ell})),\ldots,k\}$.
\end{lemma}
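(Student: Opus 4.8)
The plan is to run the ``cut-and-paste'' optimal-substructure argument standard for this kind of dynamic program --- the same one behind Lemma~\ref{lmm:iff-const} --- the only change being that the budget bookkeeping must now debit the cost $c(e(u_\ell))$ of the edge joining $u$ to its leftmost child $u_\ell$ rather than a single unit. Throughout I would use the partition $\tau = \tau_\ell \cup \{e(u_\ell)\} \cup \tau''$, noting that $\tau''$ is a non-empty subtree rooted at $u$ (since $u \notin V(\tau_\ell)$) and $\tau_\ell$ is a subtree rooted at $u_\ell$, and I would repeatedly invoke the monotonicity of every subproblem optimum in its budget (raising the cost bound cannot lower the maximum prize); this monotonicity also makes the ``cost at most $k-i$'' and ``cost $k-i$'' phrasings in the statement interchangeable for a maximizer, so I need not fuss over which appears.

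For the forward (``only if'') direction I would take a maximum-prize subtree $\tau$ rooted at $u$ of cost at most $k$ containing $u_\ell$ and set $i := c(\tau_\ell) + c(e(u_\ell))$, the budget charged to the leftmost branch together with its connecting edge; since $c \ge 0$ and $c(\tau) \le k$ we get $c(e(u_\ell)) \le i \le k$ and $c(\tau'') = c(\tau) - i \le k - i$. If $\tau_\ell$ were not itself a maximum-prize subtree rooted at $u_\ell$ of cost at most $i - c(e(u_\ell)) = c(\tau_\ell)$, I would swap in a strictly better such subtree $\sigma_\ell$: the result $\sigma_\ell \cup \{e(u_\ell)\} \cup \tau''$ still contains $u_\ell$, still has cost at most $(i - c(e(u_\ell))) + c(e(u_\ell)) + c(\tau'') = c(\tau) \le k$, and has strictly larger prize, contradicting the optimality of $\tau$; the identical swap on $\tau''$ forces $\tau''$ to be optimal among subtrees rooted at $u$ of cost at most $k-i$, which gives the ``only if'' direction with this value of $i$. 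The converse is then the routine assertion that reassembling optimal pieces is optimal for the best admissible split: if $\tau_\ell$ and $\tau''$ are optimal for budgets $i - c(e(u_\ell))$ and $k-i$ and $i$ maximizes, over $j \in \{c(e(u_\ell)),\ldots,k\}$, the sum of the two corresponding subproblem optima, then $\tau_\ell \cup \{e(u_\ell)\} \cup \tau''$ is feasible by the same arithmetic, and applying the forward direction to any global optimum shows its prize is at most that maximal sum, hence at most $p(\tau)$.

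The one genuinely new point relative to Lemma~\ref{lmm:iff-const}, and the step I expect to need the most care, is precisely this shift by $c(e(u_\ell))$ in the split: one allocates $i$ to ``leftmost branch plus connecting edge'' but only $i - c(e(u_\ell))$ to the subtree at $u_\ell$, and must check that, as $\tau_\ell$ varies, the index $i$ ranges over exactly $\{c(e(u_\ell)),\ldots,k\}$. Everything else --- non-emptiness of $\tau''$ so that $u$ survives and the pieces glue back along $e(u_\ell)$, non-negativity of the costs, and budget-monotonicity of the optima --- transfers from the constant-cost analysis essentially verbatim, which (as with Lemma~\ref{lmm:iff-const}) is why the lemma ``is easy to show''; it immediately feeds into the integer-cost analogue of recursion~(\ref{eqn:dyn-prog-rec}).
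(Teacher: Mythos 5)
Your proof is correct and is essentially the argument the paper intends: the paper states this lemma without proof (``similarly to Lemma~\ref{lmm:iff-const}''), and your cut-and-paste exchange argument, with the budget split shifted by $c(e(u_{\ell}))$ and feasibility/disjointness of the reassembled pieces checked, is exactly the standard justification underlying recursion~(\ref{eqn:dyn-prog-rec-C}). Your handling of the converse via the maximizing split $i$, and your reading of ``cost $k-i$'' as ``at most $k-i$,'' are the right (charitable) interpretation of the statement as it is actually used in the dynamic program.
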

Using similar notation and definitions as in
Section~\ref{sec:constant-cost}, by Lemma~\ref{lmm:iff-nats} we get
the following recursion:
\begin{equation}
\label{eqn:dyn-prog-rec-C}
N_k(u;\tau) = \max\left( N_k(u;\tau''), 
\max_{c(e(u_{\ell}))\leq i\leq k} \left(N_{i-c(e(u_{\ell}))}(u_{\ell};\tau_{\ell}) 
+ N_{k-i}(u;\tau'')\right)\right),
\end{equation}
and we obtain similarly the following.
\begin{theorem}
\label{thm:c=nats}
If $M = (T, c, p, B, G)$ is a CSM, where $T$ has $n$ vertices and 
$c : E(T)\rightarrow {\nats}$ takes only positive-integer values, 
then the {\sc GOAS-OP} can be solved in $O(B^2n)$-time.
\end{theorem}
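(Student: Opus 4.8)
The plan is to transcribe the proof of Theorem~\ref{thm:c=const} almost verbatim, with the integer budget $B$ playing the role that the edge-count bound $m$ played there. First I would pass to a clean setting, as already noted in this section: contract every edge of cost $0$ so that $c:E(T)\to\nats$ is strictly positive, and replace $B$ by $\lfloor B\rfloor$ so that $B$ is a positive integer; neither step changes the collection of valid attack strategies nor their prizes. Then I would visit the vertices of $T$ in post-order, and for each vertex $u$ fill in the $d(u)\times(B+1)$ integer matrix $\mathbf{N}(u)$ whose entry $N_k^i(u)$ is the maximum prize of a subtree of $T^i(u)$ rooted at $u$ of total cost at most $k$. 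The base cases are $N_0^i(u)=p(u)$ for every $i$, and $N_k^i(u)=p(u)$ for all $i$ and $k$ when $u$ is a leaf; the internal vertices are handled by recursion~(\ref{eqn:dyn-prog-rec-C}), with the degenerate case $\tau''=\{u\}$ contributing the value $p(u)$.

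The correctness of~(\ref{eqn:dyn-prog-rec-C}) is precisely the content of Lemma~\ref{lmm:iff-nats}: an optimal subtree of $T^i(u)$ rooted at $u$ of cost at most $k$ either omits the edge $e(u_\ell)$ to the leftmost child $u_\ell$ of $T^i(u)$, in which case it is an optimal subtree of $T^{i+1}(u)$ and its value is $N_k^{i+1}(u)$; or it uses that edge, in which case Lemma~\ref{lmm:iff-nats} decomposes it into an optimal cost-$(\le i-c(e(u_\ell)))$ subtree of $T(u_\ell)$ rooted at $u_\ell$ together with an optimal cost-$(k-i)$ subtree of $T^{i+1}(u)$ rooted at $u$, for some $i\in\{c(e(u_\ell)),\dots,k\}$, whose values are the already-computed entries $N_{i-c(e(u_\ell))}^1(u_\ell)$ and $N_{k-i}^{i+1}(u)$. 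Every entry on the right-hand side of~(\ref{eqn:dyn-prog-rec-C}) thus either lies in a strictly later row of the same matrix or in the (already completed) matrix of a child, and every column index involved lies in $\{0,\dots,B\}$ because $c(e(u_\ell))\le i\le k\le B$; hence filling the rows of $\mathbf{N}(u)$ in the order $i=d(u),d(u)-1,\dots,1$ makes the computation well defined.

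For the running time, a single entry $N_k^i(u)$ is obtained from~(\ref{eqn:dyn-prog-rec-C}) with $O(k)=O(B)$ additions and comparisons, so $\mathbf{N}(u)$ is filled in $d(u)(B+1)\cdot O(B)=d(u)\,O(B^2)$ time; summing over all vertices and applying the Handshaking Lemma, $\sum_{u\in V(T)}d(u)=2(n-1)$, yields total time $O(B^2)\cdot 2(n-1)=O(B^2n)$. Finally, the maximum prize of a valid game-over attack strategy in $M$ is read off as $N_B^1(r)$ at the root $r$, which answers {\sc GOAS-OP} (and comparing it with $G$ answers {\sc GOAS-DP}). I do not anticipate a real obstacle, since this is a direct adaptation of the constant-cost argument; the only points that need a moment of care are the reduction making $B$ a positive integer (and the contraction of zero-cost edges beforehand) and the verification that the shifted index $i-c(e(u_\ell))$ in~(\ref{eqn:dyn-prog-rec-C}) never leaves the range $\{0,\dots,B\}$ --- both of which are immediate.
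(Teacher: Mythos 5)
Your proposal is correct and matches the paper's own argument essentially verbatim: the paper likewise contracts zero-cost edges, floors $B$, fills the $d(u)\times(B+1)$ matrices $\mathbf{N}(u)$ via Lemma~\ref{lmm:iff-nats} and recursion~(\ref{eqn:dyn-prog-rec-C}), and counts $O(B)$ operations per entry with the Handshaking Lemma to get $O(B^2n)$. The only cosmetic caution is your double use of $i$ as both the row index and the split index in the recursion, which the paper shares and which does not affect correctness.
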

{\sc Remark:} (i) Although we are not able to obtain a compact
expression for the exact number of arithmetic operations that yield
Theorem~\ref{thm:c=nats}, the bound $N(n,B) = 2(n-1)B^2$ still is an
upper bound, as for Theorem~\ref{thm:c=const}.  (ii) Note the
assumption that $c$ is an {\em integer}-valued cost function is
crucial, since otherwise, we would not have been able to use the
recursion (\ref{eqn:dyn-prog-rec-C}) in at most $B$ steps.
\begin{corollary}
The {\sc GOAS-DP} when restricted to integer-valued penetration costs
can be solved in pseudo-polynomial time.
\end{corollary}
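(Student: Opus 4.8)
The plan is to derive this corollary immediately from Theorem~\ref{thm:c=nats}, observing that {\sc GOAS-DP} is answered by comparing the optimum computed for {\sc GOAS-OP} against the threshold $G$. First I would bring the instance into the form required by Theorem~\ref{thm:c=nats}. Given a CSM $M = (T, c, p, B, G)$ with $c : E(T) \rightarrow \{0, 1, 2, \ldots\}$, contract every edge $e$ with $c(e) = 0$; this is an $O(n)$ preprocessing step that changes neither the set of achievable prizes at any given budget nor the answer to the decision question, since a zero-cost edge may always be included for free. The result is a CSM in which the cost function takes only positive-integer values. If in addition $B \notin \nats$, replace $B$ by $\lfloor B \rfloor$: because all edge costs are now integers, the cost $c(T')$ of any subtree is an integer, so a subtree is valid for budget $B$ exactly when it is valid for budget $\lfloor B \rfloor$.

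Next I would run the dynamic program underlying Theorem~\ref{thm:c=nats} on the preprocessed instance to compute, in $O(B^2 n)$ time, the entry $N_B^1(r)$ of the matrix $\mathbf{N}(r)$ at the root. By definition this entry is the maximum prize $p(T')$ over all subtrees $T'$ of $T$ rooted at $r$ of total cost at most $B$, i.e.\ the optimal value of {\sc GOAS-OP} on $M$. I would then output ``yes'' if $N_B^1(r) \geq G$ and ``no'' otherwise: a game-over attack strategy in $M$ is precisely a valid CSAS of prize at least $G$, and such a strategy exists if and only if the maximum prize of a valid CSAS is at least $G$.

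Finally I would note that the total running time is dominated by the $O(B^2 n)$ cost of the dynamic program, which is polynomial in $n$ and in the numeric value of $B$, hence pseudo-polynomial in the length of the input (which encodes $B$ in $O(\log B)$ bits). There is essentially no obstacle beyond Theorem~\ref{thm:c=nats} itself; the only points meriting a word of justification are that contracting zero-cost edges does not affect the answer and that truncating a non-integer budget to $\lfloor B \rfloor$ is harmless, both of which follow at once from the integrality of the edge weights.
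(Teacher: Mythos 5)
Your proposal is correct and follows exactly the paper's route: the corollary is stated as an immediate consequence of Theorem~\ref{thm:c=nats}, with the zero-cost-edge contraction and budget truncation to $\lfloor B\rfloor$ being precisely the preprocessing steps the paper describes before that theorem, and the decision answer obtained by comparing the computed optimum $N_B^1(r)$ with $G$. Nothing in your argument deviates from or adds a gap to the paper's reasoning.
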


\subsection{Approximation Scheme} 

We now can present a polynomial time approximation scheme (PTAS) for
solving the {\sc GOAS-OP} from Problem~\ref{def:csam}. In 
Observation~\ref{obs:GOSC=NP} we saw that the {\sc GOAS-OP} is 
an NP-hard optimization problem. But this is not the whole story;
although it is hard to compute the exact solution, one can obtain 
a polynomial time approximation algorithm if we allow slightly more budget
for the attacker than he/she wants to spend. We will in this
section describe one such approximation scheme. Our approach 
here is similar to the PTAS for the optimization of the 
$0/1$-{\sc Knapsack Problem} presented in the classic 
text~\cite[Section 17.3]{Papa-Steig}. 

We saw in Theorem~\ref{thm:c=nats} that {\sc GOAS-OP}
can be solved in $O(B^2n)$-time, if the cost is integer valued
and $B$ is the budget of the attacker. So for large $B$ this
can be far polynomial time. For each fixed $t\in\nats$
we can write the integer cost $c(e)$ of each edge $e\in E(T)$ as 
\begin{equation}
\label{eqn:cmod2^t}
c(e) = c_q(e) + c_r(e), \ \ \mbox{ where }c_r(e) = c(e) \bmod 2^t,
\end{equation}
that is, we obtain a new cost function $c_q$ by ignoring the
last $t$ digits of $c(e)$ when it is written as a binary number.
Since each $c_q$ is divisible by $2^t$, solving
{\sc GOAS-OP} for $c_q$ and budget $B$ is the same as solving it for
the cost function $2^{-t}c_q$ and budget $2^{-t}B$. Therefore, we can by 
Theorem~\ref{thm:c=nats} solve the {\sc GOAS-OP}
for this new cost function $c_q$ in $O((2^{-t}B)^2n)$-time.

Let $T'$ (resp.~$T_q'$) be an optimal {\sc GOAS-OP} subtree of $T$ 
w.r.t~the cost $c$ (resp.~$c_q$),  
so $p(T')$ is maximum among subtrees with $c$-weight $\leq B$,
and $p(T_q')$ is maximum among subtrees with $c_q$-weight $\leq B$.
In this case we have
\begin{equation}
\label{eqn:cost-q}
c(T_q') = c_q(T_q') + c_r(T_q') \leq B + |E(T_q')|\cdot 2^t 
\leq B + n2^t.
\end{equation}
Also, since $c_q(T') \leq c(T')\leq B$ we have by the definitions
of $T'$ and $T_q'$ that $p(T') \leq p(T_q')$. Therefore if there
is a GOAS $T'$ w.r.t.~the cost $c$, then there certainly 
is one w.r.t.~the cost $c_q$, namely $T_q$.
Hence, if $\epsilon = \frac{n2^t}{B}$, then we obtain from 
(\ref{eqn:cost-q}) that $c(T_q) \leq (1+\epsilon)B$ and
$T_q'$ is here definitely a GOAS that further can be computed
in $O((n/\epsilon)^2n) = O((1/\epsilon)^2n^3)$-time.
Conversely, for a given $\epsilon\geq 0$, we obtain such
an approximation algorithm by considering the cost $c_q$
defined by (\ref{eqn:cmod2^t}) where 
\begin{equation}
\label{eqn:t-digits}
t = \left\lfloor\lg\left(\frac{\epsilon B}{n}\right)\right\rfloor.
\end{equation}
We therefore have the following.
\begin{theorem}
\label{thm:GOAS-has-approx}
The {\sc GOAS-OP} admits a polynomial time approximation scheme;
for every $\epsilon \geq 0$ a GOAS $T'$ of cost of at most
$(1+\epsilon)B$ can be computed in $O((1/\epsilon)^2n^3)$-time.
\end{theorem}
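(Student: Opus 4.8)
The plan is to bootstrap the pseudo-polynomial algorithm of Theorem~\ref{thm:c=nats} by \emph{coarsening} the cost function, in the spirit of the classical Knapsack PTAS. I would assume first that $c$ is integer-valued (contract zero-cost edges and replace $B$ by $\lfloor B\rfloor$ as in Section~\ref{sec:int-cost-approx}; a rational instance reduces to this case by clearing denominators, which leaves the final bound unchanged, since the common denominator cancels between budget and cost). For a parameter $t\in\nats$ to be chosen, define $c_q(e)$ to be $c(e)$ with its last $t$ binary digits zeroed out, so that $c(e)=c_q(e)+c_r(e)$ with $0\le c_r(e)<2^t$ as in~(\ref{eqn:cmod2^t}). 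Since every value of $c_q$ is a multiple of $2^t$, solving {\sc GOAS-OP} for the pair $(c_q,B)$ is the same as solving it for $(2^{-t}c_q,\,\lfloor 2^{-t}B\rfloor)$, which is again integer-valued, so by Theorem~\ref{thm:c=nats} it can be done in $O((2^{-t}B)^2 n)$ time; let $T_q'$ be the subtree returned.

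Next I would carry out the error analysis in two directions. Since $c_q\le c$ pointwise, every $c$-feasible subtree is $c_q$-feasible, so the true optimum $T'$ (with $c(T')\le B$) satisfies $p(T')\le p(T_q')$; in particular, if any GOAS exists with respect to $c$ then $T_q'$ is itself one, because $p(T_q')\ge p(T')\ge G$. In the other direction, $T_q'$ has at most $n$ edges, so the truncation error it accumulates is $\sum_{e\in E(T_q')}c_r(e)<n2^t$, whence $c(T_q')\le c_q(T_q')+n2^t\le B+n2^t$, which is~(\ref{eqn:cost-q}). Choosing $t=\lfloor\lg(\epsilon B/n)\rfloor$ as in~(\ref{eqn:t-digits}) forces $n2^t\le\epsilon B$, so $c(T_q')\le(1+\epsilon)B$, and the running time collapses to $O((2^{-t}B)^2 n)=O((n/\epsilon)^2 n)=O((1/\epsilon)^2 n^3)$.

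Two degenerate cases deserve a line each. If $\epsilon B/n<1$ the exponent in~(\ref{eqn:t-digits}) is nonpositive, and one should just take $t=0$ (no rounding at all) and invoke Theorem~\ref{thm:c=nats} directly in $O(B^2 n)$ time; since here $B<n/\epsilon$ this is still within $O((1/\epsilon)^2 n^3)$. The case $\epsilon=0$ is just the exact algorithm of Theorem~\ref{thm:c=nats}, with the stated time bound vacuously satisfied.

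The hard part, to the extent there is one, will be keeping the two inequalities consistent at once: coarsening the costs can only \emph{help} feasibility, which is exactly what guarantees $p(T_q')\ge p(T')$ and hence that the output is a bona fide GOAS; but then the true cost of that output can overshoot $B$, and one must verify the overshoot is bounded purely by the per-edge truncation error times the number of edges, using that a subtree of an $n$-vertex tree has fewer than $n$ edges. Getting the bookkeeping straight between ``budget $B$ in the rounded problem'' and ``budget $(1+\epsilon)B$ in the original problem,'' together with the scaling identity that turns cost $c_q$ and budget $B$ into cost $2^{-t}c_q$ and budget $\lfloor 2^{-t}B\rfloor$, is where a little care is needed; beyond that the statement is an immediate consequence of Theorem~\ref{thm:c=nats} and Observation~\ref{obs:GOSC=NP}.
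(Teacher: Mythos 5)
Your proposal is correct and follows essentially the same route as the paper: truncate the last $t$ binary digits of the integer costs as in~(\ref{eqn:cmod2^t}), solve the rescaled instance via Theorem~\ref{thm:c=nats} in $O((2^{-t}B)^2n)$ time, bound the overshoot by $n2^t$ as in~(\ref{eqn:cost-q}), and choose $t$ as in~(\ref{eqn:t-digits}); your handling of rational costs by clearing denominators matches the paper's remarks after the theorem. The only additions are your (welcome) explicit treatment of the degenerate cases $\epsilon B/n<1$ and $\epsilon=0$, which the paper leaves implicit.
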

{\sc Remarks:} (i) In establishing the above Theorem~\ref{thm:GOAS-has-approx}
we started with an integer cost function $c : E(T)\rightarrow {\nats}$.
The same approach could have been used for a rational
cost function $c : E(T)\rightarrow {\rats}$ where $c(e)$
has $d$ binary binary digits after its binary point 
(i.e.~radix point when written as a rational number in
base $2$.) By considering a new integer valued cost function
$c' : E(T)\rightarrow {\nats}$, where $c'(e) = 2^dc(e)$ for each 
$e\in E(T)$, we can
in the same manner as used above, obtain an approximation
algorithm where we replace $B$ with $B' = 2^dB$. Needless
to say however, in this case the corresponding cost function
$c_q'$ is obtained by truncating or ignoring only $t-d$ of the
digits of $c'$ (instead of the $t$ digits of $c$), 
to obtain a solution using a budged of $(1+\epsilon)B$.
(ii) Further along these lines, if the cost function 
$c : E(T)\rightarrow {\rats}$ is given as a fraction
$c(e) = a(e)/b(e)$, where $a(e), b(e)\in\nats$ are relatively prime,
we can let $M$ be the least common multiple of the $b(e)$ where $e\in E(T)$
and obtain by scaling by $M$ a new
integer valued cost function $c'' : E(T)\rightarrow {\nats}$ where
$c''(e) = Mc(e)$ for each $e\in E(T)$. Again, since $c''$ is integer valued
we can in the same manner  obtain an approximation 
algorithm where we replace $B$ with $B'' = MB$. In this
case the corresponding cost function $c_q''$ is obtained by
truncating or ignoring even fewer digits, namely $t - \lg M$ 
of the digits of $c''$. This will also yield a polynomial time approximation
algorithm in terms of $n$ and $1/\epsilon$ despite the fact that
$M$ can become very large (i.e.~if all the costs have pairwise
relatively prime denominators $b(e)$.)

\comment{ 
if $\epsilon$ is very small, that is
$\epsilon B/n$ is close to $1$, then, needless to say, $1/\epsilon$
is large and our approximation scheme is simply reduces to an exact solution
as in Theorem~\ref{thm:c=nats} as $c_q = c$, the solution of which, although not
polynomial in $n$ alone, is still polynomial in $n$ and $1/\epsilon$.
}

\subsection{General Weighted Trees}

In our framework a CSM $M$ is presented as a rooted tree provided
with two weight functions: one on the vertices and one on the edges.
In the model the root serves merely as a starting vertex and does not
(usually) carry any weight (that is, has no prize attached to it).
However, given a general non-rooted tree $T$ provided with two
edge-weight functions $w, w' : E(t) \rightarrow {\rats}$, we can
always add a root to some vertex and then push the weights of one of
the weight functions, say $w$ down to the unique vertex away from the
root.  In this way we obtain a CSM $M$ to which we can apply both
Theorems~\ref{thm:c=const} and~\ref{thm:c=nats}. With this slight
modification, we have the following corollary for general weighted
trees.
\begin{corollary}
\label{cor:general}
Let $T$ be a tree on $n$ vertices, $w, w' : E(T) \rightarrow {\rats}$
two edge-weight functions, and $B, G$ two rational numbers.  If the
function $w$ is either (i) a rational constant $c\in{\rats}$ 
or (ii) integer-valued, then the existence of a subtree $T'$ of $T$ 
such that $w'(T')\leq B$ and $w(T')$ is a maximum can be determined in 
$O(m^2n)$-time, where $m =\lfloor B/c\rfloor$ in case (i), 
and in $O(B^2n)$-time in case (ii).
\end{corollary}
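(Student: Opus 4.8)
The plan is to turn the two-edge-weight tree into a single CSM and then invoke Theorems~\ref{thm:c=const} and~\ref{thm:c=nats}. First I would root $T$ at an arbitrary vertex $r$ and put $p(r)=0$. Every non-root vertex $u$ then has a unique parent edge $e(u)$, so I can push the objective $w$ down onto the vertices by setting $p(u)=w(e(u))$ for each non-root $u$, while retaining the budget-constrained function $w'$ as the edge (penetration-cost) function $c=w'$. This produces a CSM $M=(T,c,p,B,G)$ on the same rooted tree.

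Next I would verify that the reduction is faithful to the stated problem (maximizing $w$, \emph{not} $w'$). For any subtree $T'$ containing $r$ we have $E(T')=\{e(u):u\in V(T')\setminus\{r\}\}$, so, using $c(e(r))=0$, both $p(T')=\sum_{u\in V(T')}p(u)=\sum_{u\in V(T')\setminus\{r\}}w(e(u))=w(T')$ and $c(T')=\sum_{u\in V(T')}c(e(u))=w'(T')$ hold identically. Thus a VCSAS of $M$ is precisely a rooted subtree with $w'(T')\le B$, and the value returned by {\sc GOAS-OP} on $M$ equals $\max\{\,w(T') : w'(T')\le B\,\}$, which is exactly what the corollary asks for. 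I would write out this bookkeeping so that the equivalence is exact rather than ``up to symmetry.''

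The complexity claims then require adapting the dynamic program of Section~\ref{sec:constant-cost} (resp.\ Section~\ref{sec:int-cost-approx}), and this is where I expect the real difficulty. Theorems~\ref{thm:c=const} and~\ref{thm:c=nats} place the constant-/integer-valued hypothesis on the \emph{penetration-cost} function, i.e.\ on the function that is bounded by $B$ and that indexes the matrices and the recursions (\ref{eqn:dyn-prog-rec}), (\ref{eqn:dyn-prog-rec-C}); in the corollary the hypothesis is carried instead by $w$, the function being maximized. I would therefore transpose the recursion: index the matrices by the attainable values of $w$ --- the edge count in case (i), where $w\equiv c$ and $p(T')=c\,|E(T')|$, and the integer prize in case (ii) --- and store in each cell the \emph{minimum} $w'$-cost of a rooted subtree realizing that value, reporting the largest value whose minimum cost is at most $B$. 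Lemma~\ref{lmm:iff-const}/Lemma~\ref{lmm:iff-nats} transfers to this ``minimum cost at a fixed $w$-level'' formulation. The delicate step --- the one I expect to cost the most work --- is matching the running time to the stated $O(m^2n)$ with $m=\lfloor B/c\rfloor$ in case (i) and $O(B^2n)$ in case (ii): one must bound the number of indexed $w$-levels and rerun the Handshaking-Lemma accounting of Theorem~\ref{thm:c=const} on the transposed recursion, checking that indexing by the constant-/integer-valued objective rather than by the cost still yields a cost of the form $O((\text{index range})^2 n)$ with the index range controlled by the quantity appearing in each bound.
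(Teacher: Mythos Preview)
Your first two paragraphs reproduce exactly what the paper does: the entire argument for Corollary~\ref{cor:general} is the short paragraph preceding it --- choose a root, push one of the two edge weights down to the child endpoint to serve as the prize function~$p$, keep the other as the penetration cost~$c$, and invoke Theorems~\ref{thm:c=const} and~\ref{thm:c=nats}. There is no further work; the paper regards the running times as inherited verbatim from those two theorems.

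You have, however, put your finger on a genuine inconsistency that the paper does not address. As the corollary is literally written, the structural hypothesis (constant/integer-valued) lies on~$w$, the function being \emph{maximized}, while Theorems~\ref{thm:c=const} and~\ref{thm:c=nats} require it on the \emph{cost} function bounded by~$B$. Under the reduction $c=w'$ (which is also what the paper's own text ``push \ldots\ say $w$ down'' produces), the theorems do not apply, and the quantity $m=\lfloor B/c\rfloor$ --- with $c$ the constant value of~$w$ but $B$ a bound on~$w'$ --- has no natural meaning. This is almost certainly a labeling slip in the statement: if the hypothesis is placed on~$w'$, or equivalently the conclusion is read as ``$w(T')\le B$ and $w'(T')$ maximum,'' then the one-line reduction (push the \emph{other} weight to the vertices, keep the constrained one as the edge cost) yields the result immediately with the advertised bounds.

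Your ``transposed'' dynamic program is a reasonable attempt to rescue the literal statement, but it cannot deliver the claimed running times, and the obstruction is exactly the one you anticipated. In case~(i) the natural index is the edge count, which ranges up to~$n-1$ and has nothing to do with $\lfloor B/c\rfloor$; in case~(ii) the index is the total integer prize~$w(T)$, which is unrelated to~$B$. You would get $O(n^3)$ and $O(w(T)^2 n)$ rather than $O(m^2n)$ and $O(B^2n)$. The discrepancy you identified is in the corollary's wording, not in your reasoning; the paper's proof simply proceeds under the intended labeling.
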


\section{Cyber Attack with Rational Penetration Costs}
\label{sec:rational-cost}

In this section we consider the more-general case of a CSM 
$M = (T, c, p, B, G)$ where the cost function 
$c : E(T)\rightarrow {\rats}$ takes at most $d$ distinct 
rational values, say $c_1,\ldots,c_d\in\rats$.
This case can model quite realistic scenarios, as there are currently
only a finite number of known encryption methods and cyber-security
designs, where a successful hack for each method/design has a specific
penetration cost.  As in previous sections, we will utilize dynamic
programming and recursion based on the splitting of a subtree $\tau$
of a planted plane subtree into two subtrees $\tau_{\ell}$ and
$\tau''$ as in (\ref{eqn:dyn-prog-rec}) and
(\ref{eqn:dyn-prog-rec-C}).  However, here we are dealing with
rational-cost values (i.e.~arbitrary {\em real} values from all 
practical purposes), and that the we are able to obtain
a polynomial time procedure in this case is not as direct.

Note that if $M$ is the least common multiple of all the
denominators of $c_1,\ldots,c_d$, then by multiplying the
cost and the budget of the attacker through by $M$, we
obtain an integer valued cost function $Mc$, which then
can by Theorem~\ref{thm:c=nats} 
be solved pseudo polynomially in $O(M^2B^2n)$-time.
Our goal here in this section, however, is to develop an 
algorithm to solve {\sc GOAS-OP} in time polynomial
in $n$ alone.

For each $i\in \{1,\ldots,d\}$, let $n_i = |\{e\in E(T) : c(e) =
c_i\}|$, and so $\sum_{i=1}^d n_i = n = |E(T)| = |V(T)| - 1$.  Let
$\mathcal{B} = \{0,1,\ldots,n_1\}\times\cdots\times\{0,1,\ldots, n_d\}
\subseteq {\ints}^d$, and note that $|\mathcal{B}| = \prod_{i=1}^d(n_i
+ 1)$.  Denote a general $d$-tuple of ${\rats}^d$ by $\tilde{x} =
(x_1,\ldots,x_d)$, and let $\tilde{x}\leq\tilde{y}$ denote the usual
component-wise partial order $x_i\leq y_i$, for each
$i\in\{1,\ldots,d\}$. If $\tilde{c} = (c_1,\ldots,c_d) \in
{\rats}^d$ is the {\em rational-cost vector}, let $\mathcal{C} =
\{\tilde{x}\in {\rats}^d : \tilde{x}\geq\tilde{0},
\ \tilde{c}\cdot\tilde{x} \leq B\}\subseteq {\rats}^d$ denote the
$d$-dimensional pyramid in ${\rats}^d$ with the $d+1$ vertices
given by the origin $\tilde{0} = (0,\ldots,0)$ and
$(0,\ldots,B/c_i,\ldots,0)$, where $i\in\{1,\ldots,d\}$.  To estimate
the number of non-negative integral points in $\mathcal{C}$, we count
the number of unit $d$-cubes within the pyramid $\mathcal{C}$.  Since
$\lfloor x\rfloor \leq x \leq \lfloor x\rfloor + 1$ for each rational
$x$, then each $\tilde{x}\in\mathcal{C}$ is contained in the unit
$d$-cube with the line segment from $\lfloor\tilde{x}\rfloor =
(\lfloor x_1\rfloor, \ldots,\lfloor x_d\rfloor)$ to
$\lfloor\tilde{x}\rfloor+\tilde{1} = (\lfloor x_1\rfloor + 1,
\ldots,\lfloor x_d\rfloor+ 1)$ as its diagonal.  Since
$\tilde{c}\cdot\tilde{x} \leq B$, then
$\tilde{c}\cdot(\lfloor\tilde{x}\rfloor + \tilde{1}) \leq B +
\sum_{i=1}^dc_i$, and hence, the number of integral points in
$\mathcal{C}$ is at most the volume $\vol(\mathcal{C'})$ of the
associated pyramid $\mathcal{C'} = \{\tilde{x}\in {\rats}^d :
\tilde{x}\geq\tilde{0}, \ \tilde{c}\cdot\tilde{x} \leq B'\}
\subseteq{\rats}^d$, where $B' = B + \sum_{i=1}^dc_i$, that is,
at most $\lfloor\vol(\mathcal{C'})\rfloor$, where
\[
\vol(\mathcal{C'}) = 
\frac{1}{d!}\prod_{i=1}^d\frac{B'}{c_i}= 
\frac{1}{d!}\prod_{i=1}^d\left( 
\frac{B + \sum_{j=1}^dc_j}{c_i}\right).
\]
Note that a CSAS $T'$ of a CSM $M$ has $k_i$ edges of cost $c_i$ for
each $i$ if and only if $\tilde{k}\in\mathcal{B}\cap\mathcal{C'}$.

\begin{definition} 
\label{def:m}
For each $i$ let $m_i = \min(\lceil B'/c_i\rceil, n_i)$, and let 
$m = \sum_{i=1}^dm_i$.
\end{definition}
{\sc Remark:} 
Note that we have 
$m = \sum_{i=1}^dm_i \leq \sum_{i=1}^dn_i = n$,
and therefore any upper bound polynomial in $m$ will yield
a bound in the same polynomial in terms of $n$.

If $\mathcal{C''} = \{0,1,\ldots,\lceil B'/c_1\rceil\} 
\times\cdots\times\{0,1,\ldots,\lceil B'/c_d\rceil\}$, then 
$\mathcal{C'}\cap {\ints}^d\subseteq\mathcal{C''}$, and
\begin{equation}
\label{eqn:k-box}
\mathcal{B}\cap\mathcal{C'} = \mathcal{B}\cap(\mathcal{C'}\cap {\ints}^d)
\subseteq  \mathcal{B}\cap\mathcal{C''}
= \{0,1,\ldots,m_1\}\times\cdots\times\{0,1,\ldots,m_d\} 
\end{equation}
Hence, by the Inequality of Arithmetic and Geometric Mean (IAGM), we
get
\[
|\mathcal{B}\cap\mathcal{C'}| \leq |\mathcal{B}\cap\mathcal{C''}|
= \prod_{i=1}^d(m_i + 1)
\leq \left(\frac{\sum_{i=1}^d(m_i+1)}{d}\right)^d = 
\left(\frac{m}{d} + 1\right)^d.
\]
We summarize in the following.
\begin{observation}
\label{obs:poly-d}
If $M$ is a CSM with $n_i$ edges of cost $c_i$ for each
$i\in\{1,\ldots,d\}$, then 
$|\mathcal{B}\cap\mathcal{C'}| \leq (m/d + 1)^d$, which is a polynomial in 
$m = \sum_{i=1}^dm_i$ of degree $d$.
\end{observation}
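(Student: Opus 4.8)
The plan is to package the chain of inequalities displayed immediately before the statement into a short, self-contained argument: the only ingredients are the set inclusion recorded in~(\ref{eqn:k-box}), the Inequality of Arithmetic and Geometric Means, and the binomial theorem. First I would recall the inclusion $\mathcal{B}\cap\mathcal{C'}\subseteq\{0,1,\ldots,m_1\}\times\cdots\times\{0,1,\ldots,m_d\}$ and then bound the cardinality of the right-hand box by AM--GM, using $\sum_{i=1}^d m_i=m$ from Definition~\ref{def:m}; finally I would expand $(m/d+1)^d$ to read off that it is a polynomial in $m$ of degree exactly $d$, and invoke the Remark after Definition~\ref{def:m} (which gives $m\le n$) to transfer this to $n$.

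The one step worth stating with a little care is the coordinate-wise bound underlying~(\ref{eqn:k-box}). Let $\tilde{k}=(k_1,\ldots,k_d)$ be a non-negative integral point of $\mathcal{C'}$, so $k_i\ge 0$ for all $i$ and $\tilde{c}\cdot\tilde{k}\le B'$; since all $c_i>0$ and all $k_j\ge 0$, this forces $c_ik_i\le B'$, hence $k_i\le B'/c_i$, and because $k_i$ is an integer, $k_i\le\lceil B'/c_i\rceil$. If moreover $\tilde{k}\in\mathcal{B}$, then $k_i\le n_i$, so altogether $k_i\le\min(\lceil B'/c_i\rceil,n_i)=m_i$ by Definition~\ref{def:m}. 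This is the only place where integrality of the coordinates enters; everything else is bookkeeping.

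Counting the lattice points of the product box then gives $|\mathcal{B}\cap\mathcal{C'}|\le\prod_{i=1}^d(m_i+1)$, and applying AM--GM to the $d$ positive reals $m_1+1,\ldots,m_d+1$ yields
\[
\prod_{i=1}^d(m_i+1)\ \le\ \left(\frac{1}{d}\sum_{i=1}^d(m_i+1)\right)^{d}=\left(\frac{m}{d}+1\right)^{d}.
\]
Expanding by the binomial theorem, $\left(\frac{m}{d}+1\right)^{d}=\sum_{j=0}^d\binom{d}{j}d^{-j}m^{j}$, which is a polynomial in $m$ of degree $d$ with positive leading coefficient $d^{-d}$; and since $m\le n$ by the Remark following Definition~\ref{def:m}, the bound $\left(m/d+1\right)^d\le\left(n/d+1\right)^d$ exhibits it as a degree-$d$ polynomial in $n$ as well. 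I do not anticipate any genuine obstacle: the observation is essentially a restatement of the inequalities established just above it, and the only mild subtlety is the integrality argument in the first step.
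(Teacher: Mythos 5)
Your proposal is correct and follows essentially the same route as the paper: the inclusion $\mathcal{B}\cap\mathcal{C'}\subseteq\{0,\ldots,m_1\}\times\cdots\times\{0,\ldots,m_d\}$ from~(\ref{eqn:k-box}), the lattice-point count $\prod_{i=1}^d(m_i+1)$, and the AM--GM bound $(m/d+1)^d$, with the binomial expansion merely making the ``degree $d$'' claim explicit. (One harmless nitpick: integrality of $k_i$ is not needed to get $k_i\le\lceil B'/c_i\rceil$, since $x\le\lceil x\rceil$ always; it is only needed when counting the points of the box.)
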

{\sc Remark:} Note that if $B'/c_i \leq n_i$ for each $i$, then 
$m_i = \min(\lceil B'/c_i\rceil, n_i) = \lceil B'/c_i\rceil$.  In
this case we have $\mathcal{C'}\cap {\ints}^d \subseteq \mathcal{B}$
and so $\mathcal{C'}\cap {\ints}^d 
= \mathcal{C'}\cap {\ints}^d\cap\mathcal{B} = \mathcal{C'}\cap \mathcal{B}$,
and so again by the IAGM, we obtain
\[
|\mathcal{B}\cap\mathcal{C'}| = |\mathcal{C'}\cap {\ints}^d| 
\leq \lfloor\vol(\mathcal{C'})\rfloor 
= \left\lfloor\frac{1}{d!}\prod_{i=1}^d(m_i+1)\right\rfloor 
\leq \left\lfloor\frac{1}{d!}\left(\frac{m}{d} + 1\right)^d\right\rfloor,
\]
where now $m = \sum_{i=1}^d\lceil B'/c_i\rceil$,
which shows that, although polynomial in $m$ of the same degree $d$
as in Observation~\ref{obs:poly-d},
the number of possible $\tilde{k}\in\mathcal{B}\cap\mathcal{C'}$ is a
much smaller fraction of $(m/d + 1)^d$.

We now proceed with our setup for our dynamic programming scheme.
As before, the idea is simple; we construct a multi-dimensional matrix/array
for each vertex $u$ of $T$, the construction of which is computed in a
recursive manner, as for the previous $2\times 2$ matrices 
$\mathbf{M}(u)$ and $\mathbf{N}(u)$. 

Specifically, for each vertex $u$ we assign a 
$d(u)\times|\mathcal{B}\cap\mathcal{C'}|$-fold array
\[
\mathbf{A}(u)= \left[A_{\tilde{k}}^i(u)
\right]_{\tilde{k}\in\mathcal{B}\cap\mathcal{C'}, \ 1\leq i\leq d(u)},
\]
where $A_{\tilde{k}}^i(u)$ is the maximum prize of a subtree of $T^i(u)$
containing $k_j$ edges of cost $c_j$ for each $j\in\{1,\ldots,d\}$
and each $\tilde{k}\in\mathcal{B}\cap\mathcal{C'}$.
For $\tilde{0} = (0,\ldots,0)$, we have $A_{\tilde{0}}^i(u) = p(u)$
for each vertex $u$ for $i = 1, \ldots, d(u)$.

{\sc Convention:} For $i\in\{1,\ldots,d\}$ and an edge $e\in E(T)$,
let $\delta_i(e) = \delta_{c_i}^{c(e)}$, where for every pair of
rational numbers $x, y\in \rats$
\[
\delta_x^y = 
\left\{ 
\begin{array}{ll}
 1 &  \mbox{ if } x = y, \\
 0 &  \mbox{ otherwise }
\end{array}
\right.
\]
is the {\em Kronecker delta function}. Further, let
$\tilde{\delta}(e) = (\delta_1(e), \ldots,\delta_d(e))$.

As in (\ref{eqn:dyn-prog-rec}) and (\ref{eqn:dyn-prog-rec-C}), we use
the same decomposition of a subtree $\tau$ of $T$ into ${\tau}_{\ell}$
and $\tau''$, and as with previous Lemmas~\ref{lmm:iff-const}
and~\ref{lmm:iff-nats}, we have the following.
\begin{lemma}
\label{lmm:iff-drationals}
The subtree $\tau$ rooted at $u$ is a maximum-prize subtree among
those with $k_i$ edges of cost $c_i$ for each $i$ and that contains
the leftmost child $u_{\ell}$ of $u$ if and only if the included
subtree of $\tau_{\ell}$ is a maximum-prize subtree among those rooted
at $u_{\ell}$ and with ${\alpha}_i$ edges of cost $c_i$ for each $i$
and the included subtree of $\tau''$ is a maximum-prize subtree rooted
at $u$ among those that do not contain $u_{\ell}$ and with ${\beta}_i$
edges of cost $c_i$ for each $i$, for some $\tilde{\alpha},
\tilde{\beta}\in \mathcal{B}\cap\mathcal{C'}$, where
$\tilde{\alpha}+\tilde{\beta} =
\tilde{k}-\tilde{\delta}(e(u_{\ell}))$.
\end{lemma}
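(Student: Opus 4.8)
The plan is to prove Lemma~\ref{lmm:iff-drationals} by essentially the same structural bijection argument used for Lemmas~\ref{lmm:iff-const} and~\ref{lmm:iff-nats}, now tracking cost multiplicities as a $d$-dimensional vector rather than a single scalar. First I would fix a subtree $\tau$ rooted at $u$ that actually contains the leftmost child $u_{\ell}$, so that $\tau$ uniquely decomposes as the edge-disjoint (and, apart from the root $u$, vertex-disjoint) union of $\tau''$ together with the edge $e(u_{\ell})$ and a subtree of $\tau_{\ell}$ rooted at $u_{\ell}$. Writing $\tilde{\alpha}$ (resp.~$\tilde{\beta}$) for the vector of cost multiplicities of the $\tau_{\ell}$-part (resp.~the $\tau''$-part), the single edge $e(u_{\ell})$ contributes exactly one edge of cost $c(e(u_{\ell}))$, i.e.~it contributes $\tilde{\delta}(e(u_{\ell}))$ to the count; hence the multiplicity vectors add up as $\tilde{\alpha}+\tilde{\beta}+\tilde{\delta}(e(u_{\ell})) = \tilde{k}$, which is the constraint stated in the lemma. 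The prize is additive over vertices, and the two parts share only the vertex $u$ whose prize $p(u)$ is counted in the $\tau''$-part; so $p(\tau) = p(\tau_{\ell}\text{-part}) + p(\tau''\text{-part})$ with no double counting.

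Next I would argue both directions of the ``if and only if.'' For the forward direction, suppose $\tau$ is prize-maximal among subtrees rooted at $u$ with multiplicity vector $\tilde{k}$ that contain $u_{\ell}$. If either included piece were not itself maximal for its own multiplicity vector, we could swap in a strictly better piece of the same cost-multiplicity vector (it attaches at the same vertex, $u_{\ell}$ or $u$ respectively), producing a subtree still rooted at $u$, still containing $u_{\ell}$, still with multiplicity vector $\tilde{k}$, but with strictly larger prize --- contradicting maximality of $\tau$. For the converse, if the two pieces are maximal for their respective vectors $\tilde{\alpha},\tilde{\beta}$ with $\tilde{\alpha}+\tilde{\beta}=\tilde{k}-\tilde{\delta}(e(u_{\ell}))$, then by additivity of the prize any competitor $\hat{\tau}$ (rooted at $u$, containing $u_{\ell}$, multiplicity vector $\tilde{k}$) decomposes the same way into pieces with some vectors $\tilde{\alpha}',\tilde{\beta}'$ summing to the same $\tilde{k}-\tilde{\delta}(e(u_{\ell}))$, but this does not immediately finish the argument --- and that is the one subtlety worth flagging.

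The main obstacle, and the place where a little care is genuinely needed, is that in the vector-valued setting $\tilde{\alpha}$ and $\tilde{\beta}$ are not determined by $\tilde{k}$ alone: there can be many ways to split $\tilde{k}-\tilde{\delta}(e(u_{\ell}))$ into $\tilde{\alpha}+\tilde{\beta}$ over $\mathcal{B}\cap\mathcal{C'}$. So the clean ``glue two optimal halves'' statement only yields optimality of $\tau$ among competitors that happen to use the same split $(\tilde{\alpha},\tilde{\beta})$; the lemma as phrased with ``for some $\tilde{\alpha},\tilde{\beta}$'' is exactly the right formulation to sidestep this --- it asserts that an optimal $\tau$ corresponds to at least one optimal pair, and conversely any optimal pair glues to an optimal $\tau$ for \emph{its} split, which is all the subsequent recursion over $\mathbf{A}(u)$ needs (the outer recursion will maximize over the choices of $\tilde{\alpha},\tilde{\beta}$). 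I would therefore phrase the proof so that the converse direction concludes ``$\tau$ is maximal among those whose induced split is $(\tilde{\alpha},\tilde{\beta})$,'' and note that taking the maximum over all admissible splits (together with the no-$u_{\ell}$ alternative, as in \eqref{eqn:dyn-prog-rec} and \eqref{eqn:dyn-prog-rec-C}) recovers the global optimum; everything else is the routine additive-prize bookkeeping inherited verbatim from the proofs of Lemmas~\ref{lmm:iff-const} and~\ref{lmm:iff-nats}, so I would simply say ``the proof is analogous'' after isolating this one point.
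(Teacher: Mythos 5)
Your proposal is correct and is essentially the paper's own (implicit) argument: the paper states Lemma~\ref{lmm:iff-drationals} without proof, asserting it via the same $\tau_{\ell}$/$\tau''$ decomposition and exchange-plus-additivity reasoning as Lemmas~\ref{lmm:iff-const} and~\ref{lmm:iff-nats}, which is exactly what you carry out with vector-valued multiplicity bookkeeping (one small slip: by definition $\tau'' = \tau - V(\tau_{\ell})$, so the two pieces are entirely vertex-disjoint rather than sharing the root $u$; your prize accounting, with $p(u)$ counted once in $\tau''$, is unaffected). The subtlety you flag about the converse direction is genuine but equally present in the earlier lemmas and harmless here, since the statement is only used to justify the recursion~(\ref{eqn:dyn-prog-rec-drationals}), which maximizes over all admissible splits $\tilde{\alpha}+\tilde{\beta}=\tilde{k}-\tilde{\delta}(e(u_{\ell}))$, so your restricted-converse phrasing matches exactly how the lemma is applied.
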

For a vertex $u$ and an arbitrary subtree $\tau$ rooted at $u$, we let
$A_{\tilde{k}}(u;\tau)$ be the maximum prize of a subtree of $\tau$
rooted at $u$ with $k_i$ edges of cost $c_i$ for each
$i\in\{1,\ldots,d\}$.  If a maximum-prize subtree of $\tau$ with $k_i$
edges of cost $c_i$ does not contain the edge from $u$ to its leftmost
child $u_{\ell}$, then $A_{\tilde{k}}(u;\tau) =
A_{\tilde{k}}(u;\tau'')$.  Otherwise, such a maximum subtree contains
${\alpha}_i$ edges of cost $c_i$ from $\tau_{\ell}$ and ${\beta}_i$
edges of cost $c_i$ from $\tau''$, where 
${\alpha}_i + {\beta}_i = c_i - \delta(e(u_{\ell}))$ 
for each $i\in\{1,\ldots,d\}$.  Finally, for
each leaf $u$ of $T$, each $i$, and 
$\tilde{k}\in \mathcal{B}\cap\mathcal{C'}$; we set 
$A_{\tilde{k}}^i(u) = p(u)$.  As
previously, we get by Lemma~\ref{lmm:iff-drationals} the following
recursion.
\begin{equation}
\label{eqn:dyn-prog-rec-drationals}
A_{\tilde{k}}(u;\tau) = \max\left( A_{\tilde{k}}(u;\tau''), 
\max_{\tilde{\alpha}+\tilde{\beta} = \tilde{k}-\tilde{\delta}(e(u_{\ell}))}
\left(A_{\tilde{\alpha}}(u_{\ell};\tau_{\ell}) 
+ A_{\tilde{\beta}}(u;\tau'')\right)\right).
\end{equation}
\begin{lemma}
\label{lmm:each-Aik}
The evaluation of each $A_{\tilde{k}}^i(u)$ takes at most 
$2(m/d + 1)^d$ arithmetic operations.
\end{lemma}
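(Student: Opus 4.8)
The plan is to count, directly from the recursion~(\ref{eqn:dyn-prog-rec-drationals}), the arithmetic operations needed to compute a single entry $A_{\tilde{k}}^i(u) = A_{\tilde{k}}(u; T^i(u))$, given that all the smaller quantities $A_{\tilde{\alpha}}(u_\ell; \tau_\ell)$ and $A_{\tilde{\beta}}(u; \tau'')$ appearing on the right-hand side have already been computed and stored. The right-hand side of~(\ref{eqn:dyn-prog-rec-drationals}) is a maximum taken over the single term $A_{\tilde{k}}(u;\tau'')$ together with one term $A_{\tilde{\alpha}}(u_\ell;\tau_\ell) + A_{\tilde{\beta}}(u;\tau'')$ for each decomposition $\tilde{\alpha} + \tilde{\beta} = \tilde{k} - \tilde{\delta}(e(u_\ell))$ with $\tilde{\alpha}, \tilde{\beta} \in \mathcal{B}\cap\mathcal{C'}$. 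So the number of such pairs $(\tilde\alpha,\tilde\beta)$ — equivalently, the number of admissible $\tilde\alpha$, since $\tilde\beta$ is then determined — is what drives the cost: each pair contributes one addition and one comparison, i.e.\ two arithmetic operations.

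Next I would bound the number of admissible $\tilde\alpha$. Since any such $\tilde\alpha$ must satisfy $\tilde 0 \le \tilde\alpha \le \tilde k - \tilde\delta(e(u_\ell)) \le \tilde k$, and every $\tilde k \in \mathcal{B}\cap\mathcal{C'}$ lies in the box $\{0,1,\ldots,m_1\}\times\cdots\times\{0,1,\ldots,m_d\}$ by~(\ref{eqn:k-box}), the vector $\tilde\alpha$ also lies in that box. Hence the number of choices for $\tilde\alpha$ is at most $|\mathcal{B}\cap\mathcal{C''}| = \prod_{i=1}^d (m_i + 1)$, which by the Inequality of Arithmetic and Geometric Means (exactly as in the computation preceding Observation~\ref{obs:poly-d}) is at most $(m/d + 1)^d$. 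Therefore the total number of arithmetic operations to evaluate $A_{\tilde{k}}^i(u)$ — two per pair $(\tilde\alpha,\tilde\beta)$, plus the $O(1)$ overhead of comparing against $A_{\tilde{k}}(u;\tau'')$ — is at most $2(m/d + 1)^d$, absorbing the additive constant into the bound. (One should also note the trivial base cases: for a leaf $u$ or for $\tilde k = \tilde 0$ the entry is just $p(u)$ and requires no arithmetic, so the bound holds vacuously there.)

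The only real subtlety — and the step I would be most careful about — is making sure the ``already computed'' hypothesis is coherent with the order in which the array is filled: the term $A_{\tilde k}(u;\tau'')$ refers to the same vertex $u$ but the deeper tree $T^{i+1}(u) = (T^i)''(u)$, i.e.\ row $i+1$ of $\mathbf{A}(u)$, and $A_{\tilde\alpha}(u_\ell;\tau_\ell)$ refers to the full subtree rooted at the leftmost child, whose array $\mathbf{A}(u_\ell)$ is computed in an earlier stage of the bottom-up pass; and all indices $\tilde\alpha,\tilde\beta$ appearing are coordinatewise no larger than $\tilde k$, hence already available. Granting that dependency structure (which mirrors exactly the one already validated for~(\ref{eqn:dyn-prog-rec}) and~(\ref{eqn:dyn-prog-rec-C})), the count above is immediate and the lemma follows.
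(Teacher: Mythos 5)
Your argument is correct and is essentially the paper's own proof: both count one addition and one comparison per admissible decomposition $\tilde{\alpha}+\tilde{\beta}=\tilde{k}-\tilde{\delta}(e(u_{\ell}))$ and then bound the number of such decompositions via (\ref{eqn:k-box}) and the IAGM by $\prod_{j=1}^d(m_j+1)\leq (m/d+1)^d$. The only cosmetic difference is that the paper records the exact count $\prod_{j=1}^d\bigl(k_j-\delta_j(e(u_{\ell}))+1\bigr)$ and uses the strict drop caused by $\tilde{\delta}(e(u_{\ell}))$ to stay strictly below $2(m/d+1)^d$, which also disposes of your extra ``$+O(1)$'' term (a maximum over $N+1$ stored values needs only $N$ comparisons), so no absorption of an additive constant is actually needed.
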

\begin{proof}
For each $\tilde{x} = (x_1,\ldots,x_d)\in {\rats}^d$, let
$\pi^+(\tilde{x}) = \prod_{i=1}^d(x_i+1)$.  By
(\ref{eqn:dyn-prog-rec-drationals}) each $A_{\tilde{k}}^i(u)$ requires
$\pi^+(\tilde{k}-\tilde{\delta}(e(u_{\ell})))$ additions and
$\pi^+(\tilde{k}-\tilde{\delta}(e(u_{\ell})))$ comparisons, and
hence all in all $2\pi^+(\tilde{k}-\tilde{\delta}(e(u_{\ell})))$
arithmetic operations.

By (\ref{eqn:k-box}) we have that $\tilde{k}\in
\mathcal{B}\cap\mathcal{C'} \subseteq\mathcal{B}\cap\mathcal{C''}$, 
and hence, $k_j\leq m_j$ for each $j\in\{1,\ldots,d\}$.  
Thus, by the IAGM, there are at most
\[
2\pi^+(\tilde{k}-\tilde{\delta}(e(u_{\ell}))) 
< 2\prod_{j=1}^d(k_j+1) 
\leq 2\prod_{j=1}^d(m_j+1)  
\leq 2\left(\frac{m}{d} + 1\right)^d
\]
arithmetic operations for evaluating each $A_{\tilde{k}}^i(u)$.
\end{proof}
Assuming each arithmetic operation takes one step, the total running
time to evaluate the entire array $\mathbf{A}(u)$ is at most a
constant multiple of
\begin{eqnarray*}
N_d(n) & = & \sum_{u\in V(T)}
\sum_{\tilde{k}\in\mathcal{B}\cap\mathcal{C'}}\sum_{i=1}^{d(u)}
2\left(\frac{m}{d} + 1\right)^d \\
       & = & \left(\sum_{u\in V(T)}d(u)\right)
\left(\sum_{\tilde{k}\in\mathcal{B}\cap\mathcal{C'}}
2\left(\frac{m}{d} + 1\right)^d \right) \\
     &\leq & 2|E(T)|\left(\frac{m}{d} + 1\right)^d
2\left(\frac{m}{d} + 1\right)^d  \\
       & = & 4(n-1)\left(\frac{m}{d} + 1\right)^{2d}.
\end{eqnarray*}
We then obtain the desired maximum prize $p(T')$ of a VCSAS $T'$ by
$p(T') = \max_{\tilde{k}\in\mathcal{B}\cap\mathcal{C'}}\left(A_{\tilde{k}}^1(r)\right)$
for the root $r$ of $T$ of our CSM $M$, which takes at most 
$|\mathcal{B}\cap\mathcal{C'}|-1<(m/d + 1)^d$ comparisons.  
Hence, we obtain the following.
\begin{theorem}
\label{thm:drationals}
If $M = (T, c, p, B, G)$ is a CSM where $T$ has $n$ vertices,
$m$ is given by Definition~\ref{def:m}, and 
$c :E(T)\rightarrow {\rats}$ takes at most $d$ distinct rational values, then
the {\sc GOAS-OP} can be solved in $O(m^{2d}n)$-time.
\end{theorem}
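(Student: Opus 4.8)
The plan is to turn the dynamic-programming apparatus already in place --- the arrays $\mathbf{A}(u)$, the recursion~(\ref{eqn:dyn-prog-rec-drationals}) justified by Lemma~\ref{lmm:iff-drationals}, and the counting estimates of Observation~\ref{obs:poly-d} and Lemma~\ref{lmm:each-Aik} --- into a single algorithm and to add up its cost. Concretely, I would process the vertices of $T$ in post-order and, at each $u$, fill the rows $i = d(u), d(u)-1,\ldots,1$ in that order; this is exactly the order in which~(\ref{eqn:dyn-prog-rec-drationals}) expresses $A_{\tilde{k}}^i(u) = A_{\tilde{k}}(u;T^i(u))$ in terms of the full array $A^1(u_{\ell})$ of the $i$-th child $u_{\ell}$ of $u$ (already available by the post-order) and the row $A^{i+1}(u)$ (already available since we descend in $i$), with the base values $A_{\tilde{0}}^i(u)=p(u)$, $A_{\tilde{k}}^i(u)=p(u)$ for every leaf $u$, and $A_{\tilde{k}}(u;\{u\})=p(u)$ for the degenerate $\tau''=\{u\}$ occurring at $i=d(u)$.

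I would then establish correctness by induction along this processing order. Granting that every array consulted on the right-hand side of~(\ref{eqn:dyn-prog-rec-drationals}) is correct, Lemma~\ref{lmm:iff-drationals} supplies precisely the needed case split: a maximum-prize rooted subtree of $T^i(u)$ with the prescribed cost profile either omits the edge $e(u_{\ell})$ --- contributing $A_{\tilde{k}}(u;\tau'')$ --- or contains it, and then decomposes as a maximum rooted subtree of $\tau_{\ell}$ of profile $\tilde{\alpha}$ joined to a maximum rooted subtree of $\tau''$ of profile $\tilde{\beta}$ for some $\tilde{\alpha}+\tilde{\beta}=\tilde{k}-\tilde{\delta}(e(u_{\ell}))$; taking the maximum over these alternatives is~(\ref{eqn:dyn-prog-rec-drationals}). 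Two bookkeeping points deserve isolation: first, the index set $\mathcal{B}\cap\mathcal{C'}$ is downward closed in the component-wise order --- it is the box $\mathcal{B}$ intersected with the down-set $\{\tilde{x}\ge\tilde{0}:\tilde{c}\cdot\tilde{x}\le B'\}$ --- so every $\tilde{\alpha},\tilde{\beta}\le\tilde{k}-\tilde{\delta}(e(u_{\ell}))\le\tilde{k}$ arising in the recursion still lies in $\mathcal{B}\cap\mathcal{C'}$ and has therefore already been computed; second, $\mathcal{B}\cap\mathcal{C'}\supseteq\mathcal{B}\cap\mathcal{C}$, so the cost profile of every VCSAS is among the computed indices. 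The answer to {\sc GOAS-OP} is then read off as the maximum of $A_{\tilde{k}}^1(r)$ over those $\tilde{k}\in\mathcal{B}\cap\mathcal{C'}$ with $\tilde{c}\cdot\tilde{k}\le B$, which by construction is the largest prize of a valid cyber-security attack strategy in $M$.

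For the running time, Lemma~\ref{lmm:each-Aik} bounds each entry $A_{\tilde{k}}^i(u)$ by $2(m/d+1)^d$ arithmetic operations; there are $d(u)\cdot|\mathcal{B}\cap\mathcal{C'}|$ entries in $\mathbf{A}(u)$, and $|\mathcal{B}\cap\mathcal{C'}|\le(m/d+1)^d$ by Observation~\ref{obs:poly-d}, so filling $\mathbf{A}(u)$ costs $O(d(u)(m/d+1)^{2d})$. Summing over $u\in V(T)$ and invoking the Handshaking Lemma $\sum_{u}d(u)=2(n-1)$ gives the total $N_d(n)=4(n-1)(m/d+1)^{2d}$ already displayed, the final maximization adds only $|\mathcal{B}\cap\mathcal{C'}|-1<(m/d+1)^d$ comparisons, and the preprocessing (imposing the planted-plane order and grouping edges by cost so that each $\tilde{\delta}(e)$ is known) is $O(n)$, both subsumed. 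Finally $(m/d+1)^{2d}=O(m^{2d})$ with $d$ treated as a constant --- discarding any cost value that labels no edge makes every $n_i\ge1$, hence $m_i\ge1$, $m\ge d$, and $m/d+1\le 2m/d$, the case $m=0$ being trivial --- which yields the claimed $O(m^{2d}n)$ bound (and, via the earlier remark $m\le n$, also an $O(n^{2d+1})$ bound polynomial in $n$ alone).

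The one genuinely delicate part is the correctness bookkeeping just sketched: getting the boundary cases right (leaves, $\tilde{k}=\tilde{0}$, the empty inner maximum when $\tilde{k}-\tilde{\delta}(e(u_{\ell}))$ has a negative coordinate so $u_{\ell}$ cannot be used, and the degenerate $\tau''=\{u\}$), pinning down the array semantics (``prize of the best rooted subtree of $T^i(u)$ with cost profile at most $\tilde{k}$''), and verifying that~(\ref{eqn:dyn-prog-rec-drationals}) preserves that semantics and that the final maximization then returns the VCSAS optimum. Everything else is a faithful transcription of the proofs of Theorems~\ref{thm:c=const} and~\ref{thm:c=nats}, with the scalar edge-count replaced throughout by the $d$-dimensional cost profile $\tilde{k}\in\mathcal{B}\cap\mathcal{C'}$ and the scalar index range by the lattice box bounded in Observation~\ref{obs:poly-d}.
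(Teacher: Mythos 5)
Your proposal is correct and follows essentially the same route as the paper: the same arrays $\mathbf{A}(u)$, the same recursion (\ref{eqn:dyn-prog-rec-drationals}) justified by Lemma~\ref{lmm:iff-drationals}, the per-entry bound of Lemma~\ref{lmm:each-Aik}, the Handshaking-Lemma summation giving $4(n-1)(m/d+1)^{2d}$, and the final read-off at the root. Your extra bookkeeping (post-order processing, downward-closedness of $\mathcal{B}\cap\mathcal{C'}$, boundary cases, and restricting the final maximization to profiles $\tilde{k}$ with $\tilde{c}\cdot\tilde{k}\leq B$ rather than all of $\mathcal{B}\cap\mathcal{C'}$) only makes explicit, and slightly tightens, details the paper leaves implicit.
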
 
{\sc Remarks:} (i) Note that when $d=1$, and hence $c_1 = c$, then $m$
in Theorem~\ref{thm:drationals} is given by 
$m = m_1 = \min(\lceil B'/c_1\rceil, n) = \min(\lceil B/c\rceil + 1, n)$, 
whereas in Theorem~\ref{thm:c=const} 
$m = \lceil B/c\rceil = \min(\lceil B/c\rceil, n)$, by the assumption that 
$\lceil B/c\rceil\leq n$.  Still, the complexity when $d=1$ in 
Theorem~\ref{thm:drationals} clearly agrees with the complexity of 
$O(m^2n)$ for solving the {\sc GOAS-OP} when $c$ is a constant function in
Theorem~\ref{thm:c=const}.  (ii) If each $m_i = O(f(n))$, for some
``slow-growing'' function of $n$, then Theorem~\ref{thm:drationals} yields
an $O(nf(n)^{2d})$-time algorithm for solving the {\sc GOAS-OP}\@.  In
particular, if each $m_i = O(1)$, then Theorem~\ref{thm:drationals} yields
a linear-time in $n$ algorithm to solve the {\sc GOAS-OP}.
\begin{corollary}
The {\sc GOAS-DP} when restricted to $d$ rational-valued penetration costs
can be solved in polynomial time. 
\end{corollary}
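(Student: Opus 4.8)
The plan is to obtain this as an immediate consequence of Theorem~\ref{thm:drationals} by the standard observation that the decision problem {\sc GOAS-DP} is no harder than the optimization problem {\sc GOAS-OP}. Given a CSM $M = (T, c, p, B, G)$ in which $c$ takes at most $d$ distinct rational values, I would first run the dynamic-programming procedure underlying Theorem~\ref{thm:drationals} to compute the maximum prize $p(T')$ over all VCSAS $T'$, namely $\max_{\tilde{k}\in\mathcal{B}\cap\mathcal{C'}} A_{\tilde{k}}^1(r)$ at the root $r$. This takes $O(m^{2d}n)$ time, with $m$ as in Definition~\ref{def:m}. Then {\sc GOAS-DP} answers ``yes'' exactly when this maximum prize is at least $G$, so one extra comparison finishes the reduction.

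Next I would verify that $O(m^{2d}n)$ is genuinely polynomial in the size of the input when $d$ is fixed. By the {\sc Remark} following Definition~\ref{def:m} we have $m = \sum_{i=1}^d m_i \le \sum_{i=1}^d n_i = n$, so the running time is $O(n^{2d+1})$, which is polynomial in $n$, and hence in the size of $T$, for every constant $d$. Since the statement fixes the number of distinct penetration-cost values to be $d$, this is all that is required.

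The only point that needs care---and it is minor---is making the decision-to-optimization reduction airtight: one must confirm that $\max_{\tilde{k}\in\mathcal{B}\cap\mathcal{C'}} A_{\tilde{k}}^1(r)$ really is the maximum prize over \emph{all} VCSAS, i.e.\ that every subtree $T'$ containing $r$ with $c(T')\le B$ has its cost-profile vector $\tilde{k}$ lying in $\mathcal{B}\cap\mathcal{C'}$. But this is precisely the correspondence recorded just before Definition~\ref{def:m} (a CSAS with $k_i$ edges of cost $c_i$ for each $i$ corresponds to $\tilde{k}\in\mathcal{B}\cap\mathcal{C'}$), together with the correctness of the recursion~(\ref{eqn:dyn-prog-rec-drationals}). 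Consequently I expect no real obstacle: the corollary is essentially a restatement of Theorem~\ref{thm:drationals} combined with the bound $m\le n$, and the proof should be a couple of lines.
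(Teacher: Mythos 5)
Your proposal is correct and matches the paper's (implicit) argument exactly: apply Theorem~\ref{thm:drationals}, answer ``yes'' iff the computed maximum prize is at least $G$, and use the bound $m\le n$ from the remark after Definition~\ref{def:m} to see that $O(m^{2d}n)$ is polynomial in $n$ for fixed $d$. Nothing further is needed.
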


\section{Summary and Conclusions}
\label{sec:summary}

This paper defined a new cyber-security model that models
systems which are designed based on defense-in-depth.  We showed that
natural problems based on the model were intractable.  We then proved
that restricted versions of the problems had either polynomial time or
pseudo-polynomial time algorithms. Table~\ref{tab:summary} 
in Section~\ref{sec:introduction} summarizes our results.
They suggest that in a real system the penetration costs
should vary, that is, although each level should be difficult to
attack, the cost of breaking into some levels should be even higher.
The tree representation of the models suggests that systems should be
designed to distribute targets in a bushy tree, rather than in a
narrow tree.  Most security systems are linear, and such systems could
be strengthen by distributing targets more widely, providing
{\em defense-in-deception}.  Although in most
situations a cyber attacker will not a priori know exact penetration
costs, target locations, and prizes, the model still gives us insight
into which types of security designs would be more effective.

We conclude the paper with a number of open questions.
\begin{enumerate}
\item
Can we quantify how much targets need to be distributed in order to
maximize security?  For example, does an $(n+1)$-ary tree provide
provably better security than an $n$-ary tree?
\item
Can we prove mathematically that the intuition of storing high-value
targets deeper in the system and having higher penetration costs on the
outer-most layers of the system results in the best security?
\item
If targets are allowed to be repositioned periodically, what does
that do to the complexity of the problems, and what is the
best movement strategy for protecting targets?
\item
Using the model, can one develop a set of benchmarks to rank the
security of a particular system?  How would one model prizes in a system?
\item
Can the notion of time and intrusion detection be built into the
model?  That is, if an attacker tries to break into a certain
container, the attacker may be locked out, resulting in game-over for
that attacker, or perhaps may face an even higher new penetration
cost.
\item
Are there online variants of the model that are interesting to study?
For example, a version where the topology of the graph changes
dynamically or where only a partial description is known to the
attacker.
\end{enumerate}

\section*{Acknowledgments}

This work was in part motivated by a talk that Bill Neugent of MITRE
Corporation gave at the United States Naval Academy in the fall of
2011.  We thank Bill for initial discussions about game-over issues
relating to cyber-security models.  Thanks also to Richard Chang for
discussions about the model. -- Finally, we like to thank the two
anonymous referees for their careful reading of the paper, their
pointed comments and suggestions which resulted in a greatly improved
presentation of the results and made them more complete.

\end{document}